\documentclass[a4paper,UKenglish,cleveref, autoref]{tgdk-v2021}

\usepackage{alltt}
\usepackage{amsmath,amssymb,amsfonts}
\usepackage{booktabs}
\usepackage{fancyvrb}
\usepackage{ifthen}
\usepackage[utf8]{inputenc}
\usepackage{listings}
\usepackage{lmodern}
\usepackage{relsize}
\usepackage[group-separator={,}]{siunitx}
\usepackage{tabto}
\usepackage{textcomp} 
\usepackage{url}
\usepackage{verbatim}

\newcommand{\GClassic}[3]{\ensuremath{
  \ifthenelse{\equal{#1}{}}{\emptyset}{\{#1\}},
  \ifthenelse{\equal{#2}{}}{\emptyset}{\{#2\}},
  \{#3\}
}}
\newcommand{\GQuantifierSet}[3]{\ensuremath{
  \ifthenelse{\equal{#1}{}}{}{\forall\{#1\},}
  \ifthenelse{\equal{#2}{}}{}{\exists\{#2\},}
  \{#3\}
}}
\newcommand{\GQuantifierSetNoComma}[3]{\ensuremath{
  \ifthenelse{\equal{#1}{}}{}{\forall\{#1\}\,\allowbreak}
  \ifthenelse{\equal{#2}{}}{}{\exists\{#2\}\,\allowbreak}
  \{#3\}
}}
\newcommand{\GQuantifier}[3]{\ensuremath{
  \ifthenelse{\equal{#1}{}}{}{\forall#1}
  \ifthenelse{\equal{#2}{}}{}{\exists#2}
  \{#3\}
}}
\newcommand{\GIndex}[3]{\ensuremath{
  \ifthenelse{\equal{#1}{}}{}{^{\forall\{#1\}}}
  \ifthenelse{\equal{#2}{}}{}{_{\exists\{#2\}}}
  \{#3\}
}}
\newcommand{\GChosen}[3]{\GQuantifierSetNoComma{#1}{#2}{#3}}
\newcommand{\G}[3]{\ensuremath{\GChosen{#1}{#2}{#3}}}
\newcommand{\GTW}[1]{\ensuremath{\underline{#1}}}
\newcommand{\GT}[3]{\GTW{\GChosen{#1}{#2}{#3}}}

\DeclareUnicodeCharacter{AC}{\ensuremath{\neg}}
\DeclareUnicodeCharacter{D7}{\ensuremath{\times}}
\DeclareUnicodeCharacter{3B5}{\ensuremath{\epsilon}}
\DeclareUnicodeCharacter{202F}{\ensuremath{~}}
\DeclareUnicodeCharacter{2115}{\ensuremath{\bf{N}}} 
\DeclareUnicodeCharacter{2192}{\ensuremath{\rightarrow}}
\DeclareUnicodeCharacter{2194}{\ensuremath{\leftrightarrow}}
\DeclareUnicodeCharacter{2200}{\ensuremath{\forall}}
\DeclareUnicodeCharacter{2203}{\ensuremath{\exists}}
\DeclareUnicodeCharacter{2205}{\ensuremath{\emptyset}}
\DeclareUnicodeCharacter{2208}{\ensuremath{\in}}
\DeclareUnicodeCharacter{2209}{\ensuremath{\not\in}}
\DeclareUnicodeCharacter{2218}{\ensuremath{\circ}}
\DeclareUnicodeCharacter{2227}{\ensuremath{\wedge}}
\DeclareUnicodeCharacter{2228}{\ensuremath{\vee}}
\DeclareUnicodeCharacter{2229}{\ensuremath{\cap}}
\DeclareUnicodeCharacter{222A}{\ensuremath{\cup}}
\DeclareUnicodeCharacter{2260}{\ensuremath{\not=}}
\DeclareUnicodeCharacter{2264}{\ensuremath{\leq}}
\DeclareUnicodeCharacter{2265}{\ensuremath{\geq}}
\DeclareUnicodeCharacter{2286}{\ensuremath{\subseteq}}
\DeclareUnicodeCharacter{2286}{\ensuremath{\subseteq}}
\DeclareUnicodeCharacter{2287}{\ensuremath{\supseteq}}
\DeclareUnicodeCharacter{228A}{\ensuremath{\subsetneq}}
\DeclareUnicodeCharacter{22A6}{\ensuremath{\vdash}} 
\DeclareUnicodeCharacter{22A7}{\ensuremath{\models}}
\DeclareUnicodeCharacter{22A8}{\ensuremath{\vDash}}
\DeclareUnicodeCharacter{22C3}{\ensuremath{\cup}}
\DeclareUnicodeCharacter{2BA3}{\ensuremath{\nearrow}}
\DeclareUnicodeCharacter{0394}{\ensuremath{\Delta}}

\usepackage{graphicx}
\usepackage[dvipsnames,svgnames]{xcolor} 

\usepackage[normalem]{ulem} 

\makeatletter
\font\uwavefont=lasyb10 scaled 700
\def\spelling{\bgroup\markoverwith{\lower3.5\p@\hbox{\uwavefont\textcolor{Red}{\char58}}}\ULon}
\def\grammar{\bgroup\markoverwith{\lower3.5\p@\hbox{\uwavefont\textcolor{LimeGreen}{\char58}}}\ULon}
\def\phrasing{\bgroup\markoverwith{\lower3.5\p@\hbox{\uwavefont\textcolor{RoyalBlue}{\char58}}}\ULon}

\newcommand\remove{\bgroup\markoverwith{\textcolor{red}{\rule[0.5ex]{2pt}{0.4pt}}}\ULon}
\makeatother

\title{Notation3: Graph Terms for RDF} 

\author{Dörthe {Arndt}}{TU Dresden, Germany \and ScaDS.AI Dresden/Leipzig, Germany \and \url{https://iccl.inf.tu-dresden.de/web/D\%C3\%B6rthe_Arndt
}}{doerthe.arndt@tu-dresden.de}{https://orcid.org/0000-0002-7401-8487}{
}

\author{Pierre-Antoine {Champin}}{W3C \and Inria, University Côte d’Azur, CNRS, I3S UMR 7271, France \and \url{https://champin.net/}}{pierre-antoine@w3.org}{https://orcid.org/0000-0001-7046-4474}{
}

\authorrunning{D. Arndt and P.-A. Champin} 

\Copyright{Jane Open Access and Joan R. Public} 

\ccsdesc[500]{Information systems~Resource Description Framework (RDF)}
\ccsdesc[300]{Theory of computation~Logic and verification}
\ccsdesc[300]{Theory of computation~Semantics and reasoning} 

\keywords{Notation3 Logic, Graph terms, Semantics, Rules, RDF} 

\category{} 

\relatedversion{} 

\supplement{The authors declare that this article involves no relevant supplementary resources.} 

\acknowledgements{The authors would like to thank all the participants of the W3C Notation 3 Community Group\footnote{\url{https://www.w3.org/community/n3-dev/}} for fruitful discussions and their feedback on early stages of this work.}

\nolinenumbers 

\Volume{1}
\Issue{1}
\Article{42}
\DateSubmission{Date of submission}
\DateAcceptance{Date of acceptance}
\DatePublished{Date of publishing}
\SectionAreaEditor{TGDK section area editor}

\begin{document}

\maketitle

	\begin{abstract}
 Notation3 Logic was introduced as an extension of RDF allowing the user to express rules on top of triples and to make statements about graphs. The original proposal of  the logic only came with an informal specification of its semantics which 
has been refined later on. While this refinement clarified the meaning of rules and implicitly quantified variables, it left the meaning of graphs used as terms -- so-called \emph{graph terms} -- open. In this paper we aim to close this gap. We discuss in detail, how graph terms are used in Notation3, which formal requirements they need to meet, and provide examples to illustrate the difficulties when defining their semantics. We then introduce an abstract syntax for the logic and define its semantics in two steps: similarly to RDF's simple interpretations, we define interpretations and basic entailment for Notation3 which covers plain RDF triples but also triples containing graph terms. N3-entailment is then an extension of basic entailment which assigns meaning to the different special predicates of the logic like for example the implication symbol. N3-entailment is defined in a way that it can be extended whenever a new predicate is added to the logic. As Notation3 Logic is currently under development, we expect many predicates to be added. Our research provides and important step towards the full standardisation of the logic.

\end{abstract}

\section{Introduction}

Ever since the invention of the Semantic Web, the community saw the need to allow users to make statements about statements. Different approaches like singleton properties~\cite{nguyen2014don},
the inclusion of named graphs to RDF~\cite[Section 4]{rdf11-concepts}, or the integration of RDF-star~(formally RDF*)~\cite{hartig2017foundations,hartig2021rdf} into the upcoming RDF 1.2, are only some indicators for it. And while these proposals and standards to some extend address the problem, they all have some draw-backs: singleton properties and RDF-star only support statements about triples but not about RDF graphs and for named graphs, there is so far no agreement about their semantics~\cite{namedgraphs}. 

In this paper we focus on another framework which supports statements about statements: Notation3 Logic~(N3)~\cite{N3Logic}. N3 was proposed by Tim Berners-Lee and others to provide a simple syntax for RDF -- this resulted in RDF-Turtle~\cite{Prudhommeaux2014RT} -- but also as a logical extension for RDF. N3 supports so-called graph terms which are conceptually very similar to RDF-star's triple terms with the only difference that N3 does not only allow triples as terms but entire graphs. 
Of course, this comes with challenges: graph terms could stand for themselves or for bigger graphs, scoping of blank nodes is far from trivial, and graph terms are used to express logical constructs. When originally proposed, the semantics of N3 has only been defined informally and different implementations like Cwm~\cite{cwm}, EYE~\cite{eyepaper} or Data-Fu~\cite{kafer2018rule} were developed to reason on it. 
The theoretical works which followed later on to define a model theoretic semantics for N3 did focus on rules~\cite{arndt2023notation3,exN3} and on implicit quantification
~\cite{arndt2019implicit}, but did not fully cover the semantics of graph terms as it was originally intended.

In this paper we aim to fill this gap: We analyse the requirements a graph term semantics has to comply with according to the informal specifications and the existing implementations. We then propose an abstract syntax for N3 and define its semantics in two steps: we first specify a base semantics which covers the meaning of graph terms but does not assign any meaning to the built-in predicates of the logic. In that sense our base semantics is very similar to RDF's simple entailment. In a second step we define N3-semantics which covers the logical predicates in N3's \texttt{log}-namespace\footnote{\url{http://www.w3.org/2000/10/swap/log}}. Here we mainly focus on logical implication~(\texttt{log:implies}) and on graph inclusion~(\texttt{log:includes}). Especially the former is crucial for N3 as this predicate enables rule-based reasoning. We furthermore show that our definitions are in-line with the requirements.

The remainder of this paper is structured as follows: In Section~\ref{motivation} we analyse how graph terms are used in N3, and we derive the requirements for their semantics in Section~\ref{sec:req}. In Section~\ref{sec:syntax} we define an abstract syntax for N3 and explain how it maps to the concrete serialization syntax. In Section~\ref{semantics} we then provide a formal semantics for the abstract syntax, and we show in Section~\ref{sec:eval} that the semantics of graph terms meets our requirements. In Section~\ref{sec:relatedwork} we discuss the related work and explain how our formalisation relates to other approaches to enable statements about statements in RDF. With Section~\ref{sec:conclusion} we conclude our paper.

\section{Graph terms in N3}\label{motivation}
Before formally introducing N3 and our proposal for a graph term semantics, we give a few examples to make the reader familiar with the logic. N3 extends the Turtle syntax for RDF~\cite{beckett_turtle_nodate}, which means that each RDF  turtle triple is also an N3 triple.
We can, for example, state that\footnote{We omit the prefix definitions where not relevant to keep the paper short.}
\begin{equation}\label{eq:triple}
\texttt{:alice :likes :pasta.}
\end{equation}
This triple informally means that \emph{``Alice likes pasta.''} Just as in RDF, we have literals and IRIs which refer to resources, and blank nodes which  are used as existentially quantified variables. N3 furthermore supports graph terms. We can, for example, state that
\begin{equation}\label{eq:firstGT}
\texttt{:bob :says \{:alice :likes :pizza, :pasta\}. }
\end{equation}
Which informally means that \emph{``Bob says \textbf{that} Alice likes pizza \textbf{and} pasta.''} Note that, while RDF-star only allows a single triple to be used as a term, here the object is a set of triples which we interpret as a conjunction. This makes it possible to have co-referring blank nodes in graph terms. We could modify the previous statement and get:
\begin{equation}\label{eq:bob2}
\texttt{:bob :says \{\_:x :likes :pizza. \_:x :likes :pasta\}. }
\end{equation}
Despite being a little bit vague, we could give an informal interpretation before. In this case it is more complex because there is implicit existential quantification involved. In our proposal, the above means \emph{``Bob says that \textbf{there exists} someone who likes pizza and pasta''}. 
The blank node is existentially quantified \emph{within} the graph term and not outside of it\footnote{
  In the latter case, Formula~\ref{eq:bob2} would instead mean
 \emph{``There exists someone of whom Bob says they like pizza and pasta''}.
}. It has to be said, that this interpretation is still subject to discussion, which is why a semantics of N3 needs to be flexible to be able to adjust possible future changes in the scoping of implicit quantification. 


Graph terms are not only used to make statements about statements but also to express logical operations. A rule can be written as follows:
\begin{equation}\label{eq:rule1}
\texttt{\{?x :likes :pasta \} => \{?x :likes :spaghetti \}.}
\end{equation}
In this rule, the \texttt{?x} stands for a~(implicitly) universally quantified variable and the sign \texttt{=>} is syntactic sugar for the N3 predicate \texttt{log:implies}. The above rule informally means that \emph{``If someone likes pasta, then they also like spaghetti.''}. We see that the rule above is simply a triple with graph terms as subject and object and a special predicate \texttt{log:implies}. This is a design choice of N3: All data and axioms in N3 are expressed through triples, and all semantic decisions~(e.g., scoping of variables) we take for graph terms in statements like Formula~\ref{eq:firstGT} also apply to the graph terms occurring in rules like Formula~\ref{eq:rule1} above. 

We illustrate this by providing the following example:
\begin{equation}\label{eq:weirdrule}
 \texttt{\{:bob :says ?x \} => ?x. }
\end{equation}
This can be freely interpreted as \emph{``Everything Bob says is true.''} 
In order to use this formula, apply it to Formula~\ref{eq:firstGT} and derive the triples 
\begin{equation} \label{eq:res}
\texttt{:alice :likes :pasta, :pizza.}
\end{equation}
\sloppy
we need to consider the dual nature of Formula~\ref{eq:weirdrule}. 
Understanding the IRI \texttt{log:implies} as a predicate, we have an atomic formula~(in first-order-like syntax it would be $\forall x. implies(says(bob,x) ,x)$) and we can unify the universal variable \texttt{?x} with the graph term
\texttt{\{:alice :likes :pasta, :pizza.\}}. As a consequence, we get the formula
\begin{align}
\label{eq:weirdruleinst}
 \texttt{\{:bob :says  \{:alice } & \texttt{:likes pasta, :pizza.\} \}}   \\
 \notag
& \texttt{=> \{:alice :likes :pasta, :pizza.\}. } 
\end{align}
This, in turn, can now be seen as an implication and applied on Formula~\ref{eq:firstGT} it yields Formula~\ref{eq:res}.


\section{Requirements on a formal semantics}\label{sec:req}
The examples in the previous section reveal that there are several difficulties and uncertainties when dealing with graph terms in N3. Some requirements on N3's semantics only become clear after further analysing the logic. Below, we list and explain the main requirements to be taken into account when defining the semantics of N3, with a special focus on graph terms.

\textbf{Semantic compatibility with RDF}
N3 syntactically extends RDF Turtle and each RDF Turtle triple is also an N3 triple. N3 semantics should thus be compatible with RDF semantics. As N3 does not assign special semantics to the different predicates of the RDF and RDFS namespace~(e.g., \texttt{rdf:type}, \texttt{rdfs:domain}), we limit this requirement to compatibility withe RDF simple entailment. N3 should additionally support D-Entailment, but we leave that detail out of the considerations of this paper as our main focus is on graph terms. 

\textbf{Referential Opacity}
One detail which often leads to discussion when a logic supports statements about statements is the question whether or not a constant occurring in a formula term~(as a generalised graph term) refers to the same resource as it does in a stated formula~(e.g.~\cite[Section 6.4.4]{hartig2021rdf}). This is often illustrated by the \emph{superman example} which in N3 syntax would be:
\begin{align}\label{eq:super}
	&\texttt{:superman owl:sameAs :clarkKent.}\\ \label{eq:loislane}
&\texttt{:loisLane :knows \{:superman :can :fly\}}.
\end{align}
Above, we borrowed the predicate \texttt{owl:sameAs} from OWL~\cite{owl}
to express equality. The above means in a free translation: \emph{``Superman is the same as Clark Kent and Lois Lane knows that Superman can fly.''}. Under the assumption that we support equality, we ask the question whether we can derive that 
\begin{equation}\label{eq:clarkKent}
\texttt{:loisLane :knows \{:clarkKent :can :fly\}}
\end{equation}
or in words \emph{``Lois Lane knows that Clark Kent can fly.''}
In case of a positive answer, we say that graph terms are \emph{transparent}, otherwise they are \emph{opaque}.

Opaque graph terms are considered a better design choice for N3. Firstly, as the Superman example above demonstrates, there are use-cases where the derivations entailed by transparency are not desirable. Secondly, if such derivations are desired, it is still possible to enable them by adding a rule of the form:
\begin{equation}\texttt{\{?x owl:sameAs ?y. ?s ?p \{?x ?q ?r\} \}=>\{?s ?p \{?y ?q ?r\} \}.} \label{eq:transparent}
\end{equation}
or a variant thereof. Referential opacity is a weaker, less constraining, option than transparency.

\textbf{Closedness of graph terms}
Another detail to be discussed in the context of graph terms is whether we consider them \emph{closed} or \emph{open}. 
By this we mean whether the term \texttt{\{:clarkKent :can :fly\}} from Formula~\ref{eq:clarkKent} refers to a graph~(a set of triples) containing exactly one triple~(\emph{closed}), namely \texttt{:clarkKent :can :fly.} or to a potentially bigger graph containing \emph{at least} that triple -- but may contain others~(\emph{open}).
Both options have been considered as proposed semantics for datasets in RDF~\cite{namedgraphs}.
%
%
Consider the rule expressed in Formula~\ref{eq:weirdruleinst}.
If the graph term on the left-hand side could denote a graph containing arbitrary extra triples, there would be no way to prove it to be satisfied by the currently known triples, and therefore no way to derive the consequence of that~(or any) rule.
%
Therefore, we consider graph terms to be closed in our semantics.

\textbf{Clarity and flexibility of quantification scopes}
N3 has two kinds of variables: existentially quantified~(or blank nodes, inherited from RDF) and universally quantified~(in order to express rules, as illustrated earlier).
When combined with graph terms, this immediately raises the question of the scope of these quantifications.
Historically, blank nodes~(such as \texttt{\_:y}) have been scoped to the innermost graph containing them~(``locally scoped''), while universal variables~(such as \texttt{?x}) have been scoped to the top-level graph~(``globally scoped'').
While this fits most use-cases, those scoping rules have always been a topic of discussions and a point of divergence among different implementations of N3~\cite{arndt2019implicit}.
For example, they do not tell us precisely how to interpret the following simple graph:
\begin{equation}\label{eq:love}
\texttt{?x :loves \_:y.}
\end{equation}

\noindent We know that both variables are scoped to the same~(top-level) graph, but it is not clear how they interact.
Without additional information, the reader could either interpret the above as (a) 
\emph{``Everybody loves somebody''} or (b) \emph{``There is someone who is loved by everyone''}.

Even if we specify the scoping rule to disambiguate the example above~(and we will, making (a) the correct interpretation), we also need a way to explicitly express the quantification scope of a given variable when the implicit scoping rule are not adequate. Besides the example of interpretation (b) above, consider the example below where we want to describe, in an N3 graph, the content of another N3 document:
\begin{equation}\label{eq:log:semantics}
\texttt{<example.n3>\ log:semantics\ \{\ @forAll\ :x.\ :x\ :loves\ \_:y \}}
\end{equation}

Here, we could not have used the implicitly quantified universal \texttt{?x}, because its scope would have been the whole graph, while what we want to convey is that it is scoped to the graph term representing the content of \emph{example.n3}.
Explicit quantification was also introduced early in N3~(with the keywords \texttt{@forAll} and \texttt{@forSome}), but like the implicit scoping rules, their interpretations and implementations have varied largely.

\section{Syntax}\label{sec:syntax}
Before defining the semantics of N3, we introduce its syntax. 
The concrete syntax of N3, which is used to illustrate examples throughout this paper, is described in the draft report\footnote{\url{https://w3c.github.io/N3/spec/}} of the W3C Notation 3 Community Group. Space limitations prevent from presenting it in details, but the examples provided in Section~\ref{motivation} present its main features, such as graph terms~(Formula~\ref{eq:firstGT}) and rules~(Formula~\ref{eq:rule1}).

To address the last requirement from above, the \emph{flexibility to express different quantification scopes}, but also to have a way to easily deal with quantification,
we define an abstract syntax.  Below, we introduce this abstract syntax, explain its relation to concrete syntax and provide the syntactic definitions needed to define the semantics.

\subsection{Abstract Syntax}

We first define atomic triples, the building blocks of our abstract syntax.

\begin{definition}[Atomic Term]
The set of \emph{atomic terms} consists of
\begin{itemize}
\item
the set $C$ of IRIs,
\item
the set $L$ of Literals,
\item
the set $V$ of Variables.
\end{itemize}
\end{definition}
As we can see, the abstract syntax has only one kind of variable. It does not distinguish between existential~(blank nodes) and universal variables, because contrarily to the concrete syntax, all variables in the abstract syntax will be explicitly quantified.

\begin{definition}[Term]
A \emph{term} can be
\begin{itemize}
\item
an {atomic term},
\item
a {graph term} \GTW{G} where $G$ is a finite
{graph}~(see below).
\end{itemize}
\end{definition}
Note that we make a clear difference between a \emph{graph} and a \emph{graph term}. This difference will become important in Section~\ref{semantics}.

\begin{definition}[Triple]
	A \emph{triple} is a 3-tuple of terms, respectively called the {subject}, {predicate} and {object} of the triple.
\end{definition}
Note that, in contrast to RDF, N3 allows for any kind of term in all positions of a triple.

\vspace{0.5em}
While in RDF a graph is simply defined as a set of triples,
N3 needs a way to specify which variables are universally or existentially quantified, and in which scope.

\begin{definition}[Graph]
An \emph{N3 graph}~(or simply graph) is a tuple $(U, E, F)$ where
\begin{itemize}
\item
$U \subseteq V$ is a~(possibly empty) set of
{universally quantified variables},
\item
$E \subseteq V$ is a~(possibly empty) set of
{existentially quantified variables},
\item
$U$ and $E$ are disjoint,
\item
$F$ is a~(possibly empty) set of {triples}.
\end{itemize}

A {graph} $G=(U, E, F)$ is finite if the sets $U$, $E$ and $F$
are finite.
\end{definition}
%
Note that this model is a strict superset of RDF: any RDF graph can be seen as an N3 graph, where $F$ is the set of triples consituting the graph, $U=\emptyset$, and $E$ is the set of all blank nodes used in $F$. Of course, an RDF graph will use no graph term in $F$.

\begin{remark*}[Notation]
Consider a graph $G=(U, E, F)$.
To improve readability, in our examples, we will use the quantifier $\forall$ to mark the set of universally quantified variables and the quantifier $\exists$ to mark the set of existentially quantified variables, and ommit those sets when they are empty.
For example, $(\{x\}, \{y\}, \{(x, \text{:likes}, y)\})$ will be noted \G{x}{y}{(x, \text{:likes}, y)},
$(\{x\}, \{\}, \{(x, \text{:likes}, \text{:pizza})\})$ will be noted \G{x}{}{(x, \text{:likes}, \text{:pizza})},
 and $(\{\}, \{\}, \{(\text{:alice}, \text{:likes}, \text{:pizza})\})$ will be noted \G{}{}{(\text{:alice}, \text{:likes}, \text{:pizza})}.
\end{remark*}

\subsection{From concrete to abstract syntax}\label{sec:CtoA}

From a syntactic point of view, N3 can be seen as a superset of Turtle~\cite{Prudhommeaux2014RT}, with which we assume the reader to be familiar.
The main N3-specific features of the concrete syntax were introduced in Sections~\ref{motivation}. 
N3 supports 
implicitly scoped universal variables, starting with a question mark~(\textit{e.g.} \texttt{?x});  graph terms, enclosed in curly brackets~(\texttt{\{\}});  and the symbol \texttt{=>} which is an alias for the predicate \texttt{log:implies}.
%
%
IRIs in the concrete syntax translate to IRIs~($C$) in the abstract syntax
(except when quantified with the keywords \texttt{@forAll} and \texttt{@forSome}, see below).
Literals in the concrete syntax translate to literals~($L$) in the abstract syntax.
Blank nodes and \texttt{?}-prefixed universals translate to variables~($V$).
Different concrete symbols are translated to distinct variables in the abstract syntax. In particular \texttt{?x} and \texttt{\_:x} are not the same and 
if we modify Formula~\ref{eq:love} as follows 
\begin{equation}\label{eq:love2}\tag{\ref*{eq:love}$'$}
	\texttt{?x :loves \_:x.}
\end{equation}
it corresponds to the abstract N3 graph
\begin{equation}\label{eq:abstractLove}
	\G{x_1}{x_2}{(x_1, \text{:loves}, x_2)}
\end{equation}

For the translation of implicit quantification, we adhere to the following schema: the variable corresponding to any \texttt{?}-prefixed universals is automatically added to the $U$ set of the outermost graph. The variable corresponding to a blank node is automatically added to the $E$ set of the innermost graph.
For example, in the RDF triple
\begin{equation}  \texttt{\_:x\ :likes\ :pasta.}\end{equation}
we have only one level, and the blank node is quantified there, we get
\begin{equation}  \G{}{x}{(x, \text{:likes}, \text{:pasta})}\end{equation}
while in Formula~\ref{eq:bob2} from above, the blank nodes occurring in the graph term are quantified there, and we get
\begin{equation}\label{eq:bob2-as}
		  \G{}{}{(\text{:bob}, \text{:says}, 	  \GT{}{x}{(x, \text{:likes}, \text{:pizza}),(x, \text{:likes}, \text{:pasta})})}
\end{equation}

In the following example:
\begin{equation}\label{eq:hasMother}
	  \texttt{\{?x\ a\ :Human\}\ =\textgreater{}\ \{?x\ :hasMother\ \_:y\}.\ }
\end{equation}
variable $\texttt{?x}$ is universally quantified on the outermost graph, the blank node $\texttt{\_:y}$ is  local to the consequent of the rule. We get
\begin{equation}\label{eq:humanMother}
	\G{x}{}{(
		\GT{}{}{(x, \text{rdf:type}, \text{:Human})},
		\text{log:implies},
		\GT{}{y}{(x, \text{:hasMother}, y)}  
		)}
\end{equation}
which is an existential rule~\cite{exN3}.

Explicit quantification in the concrete syntax is expressed with the \texttt{@forAll} or \texttt{@forSome} keyword, followed by an IRI and by a period~(\texttt{.}).
Such quantified IRIs will then translate to variables~($V$) in the abstract syntax rather than IRIs $(C)$.
Formula~\ref{eq:love} from above could also be written as
\begin{equation}\label{eq:love3}\tag{\ref*{eq:love}$''$}
	\texttt{@forAll :x. @forSome :y. :x :loves :y.}
\end{equation}
The formula again corresponds to the abstract formula~\ref{eq:abstractLove}.
The quantifier keywords can be used in the top-level graph or in any graph term, but must always occur before the first triple of the graph to which they belong, and all \texttt{@forAll}s must be placed before any \texttt{@forSome} of the same graph\footnote{
  These constraints were not part of the original N3 syntax, but we add them to improve readability and avoid corner cases.
}. Any further use of the quantified IRI in the same graph, or nested graph terms, will be represented in the abstract syntax by the corresponding variable. For example, the formula 
\begin{equation}\label{eq:expli}
	 \texttt{@forSome\ :y.\ :y\ :says\ \{ @forAll\ :x.\ :y\ :likes :x\}.\ }
\end{equation}
is represented as 
\begin{equation}
	  \G{}{y}{(y, \text{:says}, \GT{x}{}{(y, \text{:likes}, x})}
\end{equation}

\hypertarget{variables}{%
\subsection{Variables}\label{variables}}

For the definitions which follow later in this document, we need to make
some statements about the variables occurring in N3 graphs. Here we
distinguish between variables occurring directly in a graph an those
which are nested. To be able to differentiate them, we introduce the
following concept of constituents.

\begin{definition}[Direct Constituent]
  A term is a \emph{direct constituent} of a graph $G=(U, E, F)$ if it is the subject, predicate or object of a triple $t \in F$.
%
We denote the set of direct constituents of a graph \(G\) as
  \(DC(G)\).
  \end{definition}
We now define the more general notion of constituents, which are informally speaking all terms occurring in a graph.
  
\begin{definition}[Constituent]
  A term is a \emph{constituent} of a {graph} \(G_1\) if it is either
  \begin{itemize}
  \item
    a {direct constituent} of \(G_1\), or
  \item
    a {constituent} of graph \(G_2\), where the {graph term}
    \GTW{G_2}
    is a {direct constituent} of \(G_1\).
  \end{itemize}
We denote the set of constituents of a graph \(G\) as \(C(G)\).

\end{definition}
In other words: direct constituents are all terms which directly form
the graph\textquotesingle s triples. 
Constituents also include all terms found in nested graph terms independently of the depth of the nesting.


We illustrate these definitions on an earlier example: Formula~\ref{eq:bob2}, whose abstract syntax is given in Formula~\ref{eq:bob2-as}. 
%
   We encounter three  \emph{direct constituents}:
    \text{:bob}, \text{:says}, $\GT{}{x}{(x, \text{:likes}, \text{:pizza}), (x, \text{:likes}, \text{:pasta})}$;
and seven 
    \emph{constituents}:
    all the above plus
    \(x,\text{:likes}, \text{:pizza}, \text{:pasta}\).

With that definition, we now introduce the notion of free variable.
\begin{definition}[Free Variable]
\label{def:free-variables}
The set of \emph{free variables} in a term,  triple, set of triples or graph  $x$ is
noted $FV(x)$ and defined as follows:
\begin{itemize}
\item
$FV(x) = \emptyset$ if $x \in C \cup L$,
\item
$FV(x) = \{x\}$ if $x \in V$,
\item
$FV(x) = FV(s) \cup FV(p) \cup FV(o)$ if $x=(s,p,o)$ is a {triple},
\item
$FV(x) = \cup_{t \in x} FV(t)$ if $x$ is a set of {triples},
\item
$FV(x) = FV(F) \setminus(U \cup E)$ if $x=(U, E, F)$ is a
{graph},
\item
$FV(x) = FV(G)$ if $x=\GTW{G}$ is a
{graph term}.
\end{itemize}
For a graph \(G = (U, E, F)\) and a variable \(v \in C(G) \cap V\):
 We say that $v$ is \emph{free} in $G$ iff  $v\in FV(G)$.
Otherwise, $v$ is said to be \emph{scoped} in $G$.
\end{definition}
Examples:
\begin{itemize}
\item
  In \G{x}{y}{(x, \text{:loves}, y)}~(Formula~\ref{eq:love}) all variables are
  scoped.
\item
  In \G{}{y}{(x, \text{:knows}, y)} the variable \(x\) is
  free while \(y\) is scoped.
\item
  In
  \G{}{}{(y, \text{:says}, \GT{}{x}{(x,\text{rdf:type}, \text{:Mortal})})},
  the variable \(x\) is scoped while the variable \(y\) is free.
\end{itemize}
%
With this definition, we now introduce \emph{ground} graphs:
\begin{definition}[Ground term, triple and graph]
~
\begin{itemize}
\item A term is \emph{ground} if it is either
\begin{itemize}
\item an IRI,
\item a Literal,
\item a {graph term} \GTW{G} with $FV(G)=\emptyset$.
\end{itemize}
We denote the set of \emph{ground terms} as \(T_G\).

\item A triple $(s, p, o)$ is \emph{ground} if $s$, $p$ and $o$ are \emph{ground}.

\item A graph \(G=(U,E,F)\) is \emph{ground} if \(E=U=\emptyset\) and all {triples}
in \(F\) are {ground}.
\end{itemize}
\end{definition}

Note that this definition treats graphs and graph terms differently: a graph term $\GTW{(U, E, F)}$ may be ground while the corresponding graph $(U, E, F)$ is not.
More precisely: a graph term is ground if all its variable constituents are scoped, while to be ground, a graph must additionally have no variable in its direct constituents.
A ground graph may still contain variables,
as long as these are scoped within a graph term.  This is because  these variables are not accessible by outside quantification.
If we modify Formula~\ref{eq:bob2-as} from above by adding quantification we get
\begin{equation}\label{eq:bob2-as3}
	\G{x}{}{(\text{:bob}, \text{:says}, 	  \GT{}{x}{(x, \text{:likes}, \text{:pizza}),(x, \text{:likes}, \text{:pasta})})}.
\end{equation}
As the variable $x$ in this graph only occurs, scoped, in the graph term $\GT{}{x}{(x, \text{:likes}, \text{:pizza}),(x, \text{:likes}, \text{:pasta})})$, it can not be "reached" by the universal quantifier we just added, and therefore 
Formulae~\ref{eq:bob2-as} and~\ref{eq:bob2-as3} will be interpreted equally.


The example illustrates that a graph \(G=(U,E,F)\) can contain quantified variables
\(v\in U\cup E\) which do not occur freely in any triple of \(F\) and thus do not contribute to the interpretation of the graph. In order
to simplify our considerations for the following sections, we introduce
the notion of normalised graphs which exclude such quantification.

\begin{definition}[Normalisation]
\label{def:normalisation}
For each graph \(G=(U,E,F)\), we call
\(G^N=(U\cap (FV(F)), E\cap(FV(F)), F)\) the \emph{normalisation} of \(G\).
\end{definition}
The normalisation of Formula~\ref{eq:bob2-as3} is Formula~\ref{eq:bob2-as}. We provide more examples below:
\begin{itemize}

\item
  \(
  \G{x,z}{y}{(x, \text{:knows}, y)}^N
  =
  \G{x}{y}{(x, \text{:knows}, y)}
  \)
\item
  \(
  \G{x_1, x_2, x_3}{y_1, y_2, y_3}{(\text{:socrates}, \text{:knows}, \text{:plato})})^N
  =
  \G{}{}{(\text{:socrates}, \text{:knows}, \text{:plato})}
  \)
\item
  \(
  \G{x}{y}{(y, \text{:says}, \GT{}{x}{(x,\text{rdf:type}, \text{:Mortal} )})}^N
  \) \\
  \(
  =
  \G{}{y}{(y, \text{:says}, \GT{}{x}{(x,\text{rdf:type}, \text{:Mortal} )}}
  \)
\end{itemize}

\hypertarget{isomorphisms}{%
\subsection{Isomorphisms}\label{isomorphisms}}
Having the syntactic definition in place, we can now take a closer look at the relation between two graphs. Just like in RDF, we can define isomorphisms
 between N3 graphs and between graph terms. The latter will become particularly important in Section~\ref{semantics} where we define the semantics. When interpreting graph terms,  the naming of a variable should not matter. 
Just as the Formulae~\ref{eq:love},~\ref{eq:love2} and~\ref{eq:love3} had the same meaning and could be translated to the abstract formula~\ref{eq:abstractLove}, the latter formula should have the meaning as
\begin{equation}\label{eq:abstractLove2}\tag{\ref*{eq:abstractLove}$'$}
	\G{y}{z}{(y, \text{:loves},z)}
\end{equation}
where we only changed the variable names.  And that behavior should also hold for graph terms: we would like our semantics to assign the same meaning to the abstract graph term 
\begin{equation}\label{eq:trT}
	\GT{}{x}{(x, \text{:likes}, \text{:pizza}),(x, \text{:likes}, \text{:pasta})}
\end{equation}
from Formula~\ref{eq:bob2-as} as for 
\begin{equation}\label{eq:trT2}\tag{\ref*{eq:trT}$'$}
	\GT{}{y}{(y, \text{:likes}, \text{:pizza}),(y, \text{:likes}, \text{:pasta})}
\end{equation}
where we only replaced the variable $x$ by $y$.


In order to be able to define isomorphisms between graph terms, we first introduce mechanisms to combine mappings. Here and 
later in this paper,  we write $dom(M)$ to denote the \emph{domain} $X$ of a mapping $M:X\rightarrow Y$.

\begin{definition}[Mapping Combination]
Let \(M\) and \(N\) be two mappings from terms to terms, we define their
\emph{combination}, noted \(M\bullet N\) as follows:
\[
M\bullet N(v)=\begin{cases}N(v) \text{ if } v\in dom(N)\\
M(v) \text{ if } v\not\in dom(N) \text{ and } v\in dom(M)
  \end{cases}
\]
\end{definition}
Informally speaking, this definition means that the second mapping
``overrides'' the first one whenever a term is in both of their domains. This is very similar to the behavior of quantifiers in
first-order logic where the scope of a quantifier is always the set of
quantified variables occurring freely in the quantified formula. 
This similarity is no coincidence. In  our semantics, combinations will be mainly used to handle nested variable scopes.

For a mapping $M:X\rightarrow S$, we denote by
$M|_{Y}$ the restriction of \(M\) to the set $Y\subset X$.

We can  now define what it means for N3 graphs to be isomorphic to each other.

\begin{definition}[Isomorphic under $M$]
Let \(M\) be a bijection between sets of variables in $V$:
\begin{itemize}
\item
  a graph \(G_1\) is  \emph{isomorphic to \(G_2\) under \(M\)} iff,
  \(G_1^N =(U_1,E_1,F_1)\) and \(G_2^N=(U_2,E_2,F_2)\) and
  \begin{itemize}
  \item
    \(M(U_1) := \{M(u)|u\in U_1\}= U_2\) and
    \(M(E_1):= \{M(e)|e\in E_1\}= E_2\)
  \item
    \(F_1\) is \emph{isomorphic to \(F_2\) under \(M\)}
  \end{itemize}
\item
  a set of triples \(F_1\) is isomorphic  to a set of triples
  \(F_2\) under \(M\) iff 
%
\begin{itemize}
	\item $M$ is a bijection from $FV(F_1)$ to $FV(F_2)$,
	\item   there is a bijection $T$ 
	from the triples in \(F_1\) to those in \(F_2\) such
	that, for each triple \(t_1=(s_1, p_1, o_1)\in F_1\) and
	for the image \(T((s_1, p_1, o_1))=
	(s_2, p_2, o_2)\in F_2\) 
	it holds that 
	\(s_1\), \(p_1\) and
	\(o_1\) are isomorphic under \(M\) to \(s_2\), \(p_2\) and \(o_2\),
	respectively
\end{itemize}
\item
  a variable \(v\) is \emph{isomorphic under $M$} to \(M(v)\),
\item
  an IRI or Literal is \emph{isomorphic under $M$} to itself,
\item
  a graph term \GTW{G_1} is \emph{isomorphic under \(M\)} to \GTW{G_2} if
  for \(G_1^{N} =(U_1,E_1,F_1)\) and \(G_2^{N}=(U_2,E_2,F_2)\) there is
  a bijection \(P\) between \(U_1 \cup E_1\) and \(U_2 \cup E_2\) such
  that  \(M|_{FV(G_1)}\bullet P\) is a bijection and \(G_1\) is  \emph{isomorphic to \(G_2\) under $M|_{FV(G_1)}\bullet P$}.
\end{itemize}
\end{definition}

As we will further discuss below~(Lemma~\ref{lemma:eq}), if a graph $G_1$ is isomorphic to a graph $G_2$ under $M$, then $G_2$ is isomorphic to $G_1$ under $M^{-1}$. This close connection motivates us to omit mentioning the mapping when talking about isomorphisms. We get the following more general definition: 


\begin{definition}[Isomorphic]
A graph \(G_1\) is said to be \emph{isomorphic} to a graph \(G_2\), written as
\(G_1\simeq G_2\), if there exists some \(M\) such that \(G_1\) is
isomorphic to \(G_2\) under \(M\).
\end{definition}

We note that, for the subset of N3 graphs that are also RDF graphs, our notion of isomorphism is equivalent to that defined in the RDF specification~\cite{rdf11-concepts}.
We also note the following lemma.

\begin{lemma}[Equivalence property of Isomorphisms]\label{lemma:eq}
	The isomorphic relation from above  is reflexive, symmetric, and transitive.
\end{lemma}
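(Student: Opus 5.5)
We prove the three properties simultaneously for all syntactic categories of the definition "Isomorphic under $M$" (graphs, sets of triples, terms), by structural induction on nesting depth. Since a graph term $\GTW{G}$ is finite and its graph terms are strictly smaller, the depth of graph-term nesting is a well-founded measure. For each category we prove three statements:
(i) $x$ is isomorphic to itself under the identity $\mathrm{id}$ on the relevant variables;
(ii) if $x_1$ is isomorphic to $x_2$ under $M$, then $x_2$ is isomorphic to $x_1$ under $M^{-1}$;
(iii) if $x_1$ is isomorphic to $x_2$ under $M$ and $x_2$ to $x_3$ under $M'$, then $x_1$ is isomorphic to $x_3$ under $M'\circ M$, with domains suitably restricted.
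Reflexivity, symmetry and transitivity of $\simeq$ then follow immediately from (i)--(iii) at the level of graphs.

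The atomic cases are immediate. For IRIs and literals nothing depends on the mapping. For a variable, $v\mapsto M(v)$ gives $M(v)\mapsto M^{-1}(M(v))=v$ and $M'(M(v))$.

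For sets of triples:
\begin{itemize}
\item $M$ restricts to a bijection $FV(F_1)\to FV(F_2)$, so its inverse is a bijection $FV(F_2)\to FV(F_1)$, and composition of bijections is a bijection.
\item The triple bijection $T$ is inverted or composed in the same way.
\item Componentwise isomorphism of subject, predicate and object follows from the induction hypothesis on terms.
\end{itemize}
For graphs, the conditions $M(U_1)=U_2$ and $M(E_1)=E_2$ invert and compose directly. Normalisation is idempotent, and by definition we always compare normalised graphs, so there is no extra work there.

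The main obstacle is the graph-term case, because there the mapping is altered to $M|_{FV(G_1)}\bullet P$ and we must verify that inversion and composition commute with $\bullet$.

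For symmetry, given $P\colon U_1\cup E_1\to U_2\cup E_2$, we take $P^{-1}$. We need:
\begin{itemize}
\item $FV(G_2)=M(FV(G_1))$. This follows because the set-of-triples isomorphism makes $M|_{FV(G_1)}\bullet P$ a bijection $FV(F_1)\to FV(F_2)$ that sends bound variables to bound variables via $P$, hence free variables to free variables via $M$.
\item $(M|_{FV(G_1)}\bullet P)^{-1}=M^{-1}|_{FV(G_2)}\bullet P^{-1}$. This holds because both maps agree on the two disjoint parts $FV(G_2)$ and $U_2\cup E_2$ of their domain. Disjointness comes from normalisation and the fact that free variables are by definition not in $U\cup E$.
\end{itemize}
The induction hypothesis applied to $G_1$ then gives the claim.

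For transitivity, we take $P''=P'\circ P$ and check, by the same split of the domain into free and bound variables, that
\[
(M'|_{FV(G_2)}\bullet P')\circ(M|_{FV(G_1)}\bullet P)=(M'\circ M)|_{FV(G_1)}\bullet P''.
\]
The induction hypothesis on $G_1,G_2,G_3$ then concludes.

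For reflexivity of graph terms we take $P=\mathrm{id}$. The one subtle point is that a bound variable may coincide in name with a variable in $dom(M)$. This is harmless because $\bullet$ lets $P$ override $M$, and we only ever restrict $M$ to $FV(G_1)$. Making this domain bookkeeping precise is where the care is needed.
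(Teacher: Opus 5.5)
Your proposal is correct and follows essentially the same route as the paper. Both argue by induction over graph-term nesting, using the identity for reflexivity, $M^{-1}$ with $P^{-1}$ for symmetry, and $M'\circ M$ with $P'\circ P$ for transitivity, and both rely on the fact that $\bullet$ applied to mappings with disjoint domains is just their union. Your explicit derivation of $M(FV(G_1))=FV(G_2)$ from normalisation and the bijectivity of $M|_{FV(G_1)}\bullet P$ fills in a step that the paper's proof uses without justification.
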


To prove this lemma, we first remark that, given a graph $G$ such that $G^N = (U, E, F)$, it holds that $FV(F) = U \cup E \cup FV(G)$ per Definitions~\ref{def:free-variables} and~\ref{def:normalisation}. For convenience, we denote this set by $var(G)$.

Given two mappings $A$ and $B$ with disjoint domains, we remark that $A \bullet B = A \cup B$~(treating mappings as set of pairs). If additionally $A$, $B$ and $A \bullet B$ are bijections, this implies that $(A \bullet B)^{-1} = A^{-1} \bullet B^{-1}$.

\begin{proof}[Isomorphism of N3 graphs~($\simeq$) is reflexive]
\label{reflexivity}
This is trivially true by choosing the identity for the bijections $M$, $T$ and $P$.
\end{proof}

\begin{proof}[Isomorphism of N3 graphs~($\simeq$) is symmetric]
\label{symmetry}
Let us first prove by induction that if \(G_1\) is isomorphic to \(G_2\)
under a bijection \(M\), then \(G_2\) is isomorphic to \(G_1\) under
\(M^{-1}\). Let $G_1 = (U_1, E_1, F_1)$ and
$G_2 = (U_2, E_2, F_2)$. There must be a bijection \(T\) mapping triples from
\(F_1\) to triples from \(F_2\) that satisfies the definition of {isomorphism} under
\(M\).

If we assume that neither \(G_1\) nor \(G_2\) contain graph terms, then
trivially \(G_2\) is isomorphic to \(G_1\) under \(M^{-1}\) by choosing
\(T^{-1}\) to map from triples from \(F_2\) to \(F_1\).

If \(G_1\) contains graph terms, then \(T\) must map every graph term
\GTW{H_1} occurring in \(G_1\) to a graph term \GTW{H_2} occurring
in \(G_2\), and for each such pair there must exist a bijection \(P\)
such that \(H_1\) is isomorphic to \(H_2\) under
\(M|_{FV(H_1)} \bullet P\). 

Per our induction hypothesis, $H_2$ is
also isomorphic to $H_1$ under $(M|_{FV(G1)} \bullet P)^{-1}$.
We remark that the domains of $M|_{FV(H_1)}$ and $P$ are disjoint, 
so $(M|_{FV(H_1)} \bullet P)^{-1} = (M|_{FV(H_1)})^{-1} \bullet P^{-1} = M^{-1}|_{FV(H_2)} \bullet P^{-1}$. We can then conclude that the graph term \GTW{H_2} is isomorphic to \GTW{H_1} under $M^{-1}$, choosing $P^{-1}$, and therefore that $G_2$ is isomorphic to $G_1$ under $M^{-1}$ 

From the above, it follows that if \(G_1 \simeq G_2\), then
\(G_2 \simeq G_1\).
\end{proof}

\begin{proof}[Isomorphism of N3 graphs~($\simeq$) is transitive]
\label{transitivity}

Let us first prove by induction that if \(G_1\) is isomorphic to \(G_2\)
under \(M_1\), and \(G_2\) is isomorphic to \(G_3\) under \(M_2\), then
\(G_1\) is isomorphic to \(G_3\) under \(M_2 \circ M_1\). Let
\(G_i^N = (U_i, E_i, F_i)\) for \(i \in \{1, 2, 3\}\). There must be a
bijection \(T_1\) from \(F_1\) to \(F_2\) and a bijection \(T_2\) from
\(F_2\) to \(F_3\) that satisfy the definition of {isomorphism} under
\(M_1\) and \(M_2\) respectively.

If we assume that neither \(G_1\), \(G_2\) nor \(G_3\) contain graph
terms, then trivially \(G_1\) is isomorphic to \(G_3\) under
\(M_2 \circ M_1\) by chosing \(T_2 \circ T_1\) to map from \(F_1\) to
\(F_3\).

If \(G_1\) contains graph terms, then \(T_1\) must map every graph term
\GTW{H_1} occurring in \(G_1\) to a graph term \GTW{H_2} occurring
in \(G_2\), and for each such pair there must exist a bijection \(P_1\)
such that \(H_1\) is isomorphic to \(H_2\) under
\(M_1|_{FV(H_1)} \bullet P_1\). Similarly \(T_2\) must map every graph
term \GTW{H_2} occurring in \(G_2\) to a graph term \GTW{H3}
occurring in \(G_3\), and for each such pair there must exist a
bijection \(P_2\) such that \(H_2\) is isomorphic to \(H_3\) under
\(M_2|_{FV(H_2)} \bullet P_2\). Per our induction hypothesis, \(H_1\) is
isomorphic to \(H_3\) under
$(M_2|_{FV(H_2)} \bullet P_2)\circ(M_1|_{FV(H_1)} \bullet P_1)$.
Since the domain of $M|_{FV(H_i)}$ and $P_i$ are disjoint for all $i$,
this is equal to
$(M_2|_{FV(H_2)} \cup P_2)\circ(M_1|_{FV(H_1)} \cup P_1)$.
Since the image of $M_1|_{FV(H_1)}$ is $FV(H_2)$, and the image of $P_1$ is the domain of $P_2$, this is equal to
$(M_2|_{FV(H_2)} \circ M_1|_{FV(H_1}) \cup (P_2 \circ P_1)
= (M_2 \circ M_1)|_{FV(H_1)} \bullet (P_2 \circ P_1)$
and therefore each graph term \GTW{H_1} is isomorphic to \GTW{H_3} under \(M_2 \circ M_1\), choosing $P_2 \circ P_1$.
Hence, \(G_1\) is isomorphic to \(G_3\) under \(M_2 \circ M_1\).

From the above, it follows that if \(G_1 \simeq G_2\) and
\(G_2 \simeq G3\), then \(G_1 \simeq G_3\).
\end{proof}

We give an example to further illustrate the definition. Consider the following graphs  in concrete syntax
\begin{align*}
	G_1= & \{\texttt{\_:x\ :knows\ :p.\ :p\ :says\ \{\_:x\ a\ :Genius\}.}\}\\
	G_2=& \{\texttt{\_:y\ :knows\ :p.\ :p\ :says\ \{\_:x\ a\ :Genius\}.}\}\\
	G_3=&  \{\texttt{\_:y\ :knows\ :p.\ :p\ :says\ \{\_:x\ a\ \_:z\}.}\}
\end{align*}
written in abstract syntax these are 
\begin{align*}
	G_1=&  \G{}{x}{
		(x, \text{:knows}, \text{:p}),
		(\text{:p}, \text{:says}, \GT{}{x}{(x, \text{rdf:type}, \text{:Genius})})
	}\\
	G_2=&   \G{}{y}{
		(y, \text{:knows}, \text{:p}),
		(\text{:p}, \text{:says}, \GT{}{x}{(x, \text{rdf:type}, \text{:Genius})})
	}\\
	G_3=&   \G{}{y}{
		(y, \text{:knows}, \text{:p}),
		(\text{:p}, \text{:says}, \GT{}{x,z}{(x, \text{rdf:type}, z)})
	}
\end{align*}
Then \(G_1\) is isomorphic to \(G_2\) but not to \(G_3\).

\hypertarget{other_graph_relationships}{%
\subsection{Syntactic relations between graphs }\label{other_graph_relationships}}

Beyond isomorphism, N3 graphs can have other useful relationships with each other.
In this section we define the notion of subgraph, instance/generalisation, and merge.

\begin{definition}[Subgraph]
	A graph $G_1=(U_1, E_1, F_1)$ is a \emph{subgraph} of a graph  $G_2=(U_2, E_2, F_2)$, written as $G_1\subseteq G_2$, if $FV(G_1)\subseteq FV(G_2)$, $U_1\subseteq U_2$, $E_1\subseteq E_2$, and $F_1\subseteq F_2$.
\end{definition}
Note that this definition is slightly more complicated than the subgraph relation in RDF which is basically the same as a subset~\cite{RDFsemantics}, since in N3, we need to also take variable bindings into account. Given the  graphs
\[G=  \G{x}{y}{
	(x, \text{:knows}, y),
	(x, \text{rdf:type}, \text{:Logician})}\]
and 
\begin{align*}
	G_4=&  \G{x}{}{
		(x, \text{rdf:type}, \text{:Logician)}}\\
	G_5=&  \G{}{x}{
		(x, \text{rdf:type}, \text{:Logician)}}\\
	G_6=&  \G{x}{}{
		(x, \text{:knows}, y),
		(x, \text{rdf:type}, \text{:Logician)}}
\end{align*}
$G_4$ is a subgraph of $G$, but neither are $G_5$~(quantifiers differ) nor $G_6$~($y$ is free in $G_6$).

Note that 
for the definition of subgraphs, concrete variable names matter.
The graph
\[G_7=  \G{z}{}{
	(z, \text{rdf:type}, \text{:Logician})}\]
is not a subgraph of $G$, it is, however, isomorphic to one of its subgraphs, namely $G_4$.

\vspace{1ex}
In order to define the notions of instance and generalisation, we first need to define a way to properly substitute variables in an N3 graph.

\begin{definition}[Total application]\label{def:total-application}
Let  \(M\) be a mapping from  a set of variable \(W\subset V\) into a set~$S$. 
A \emph{total application} \(M^t\) of \(M\) on graphs and terms is defined as follows:
\TabPositions{4.5cm}
\begin{itemize}
\item
  if $x$ is a variable \tab
  $M^t(x) = M(x)$ if $x\in dom(M)$, $x$ otherwise
\item
  if $x$ is an IRI or a Literal \tab
  $M^t(x)=x$
\item
  if $x=\GTW{H}$ is a graph term \tab
  $M^t(x)=\GTW{M^t(H)}$
\item
  if $x=(s, p, o)$ \tab
  $M^t(x)=(M^t(s), M^t(p) ,M^t(o))$,
\item
  if $x = F$ is a set of triples \tab
  $M^t(x)=\{ M^t(t) | t\in F\}$
\item
  if $x=(U, E, F)$ is a graph \tab
  $M^t(x)=(U, E, M|_{FV(x)}^t(F))$
\end{itemize}
%
\end{definition}

The total application of $M$ replaces all free variables occurring directly or
nested in a graph with their image through $M$. As an example we consider a modification of the
rule in Formula~\ref{eq:hasMother}, where the universal \texttt{?x} is replaced with a free variable:
\begin{equation}
	\G{}{}{(
		\GT{}{}{(x, \text{rdf:type}, \text{:Human})},
		\text{log:implies},
		\GT{}{y}{(x, \text{:hasMother}, y)}  
		)}
\end{equation}
Let \(M:\{x,y\}\rightarrow S\)
be a mapping such that \(M(x)=tom\) and \(M(y)=mary\), then we get:
\begin{multline}
M^t(\G{}{}{(
\GT{}{}{(x, \text{rdf:type}, \text{:Human})},
\text{log:implies},
\GT{}{y}{(x, \text{:hasMother}, y)\})}
)})\\
=
\G{}{}{(
\GT{}{}{(tom, \text{rdf:type}, \text{:Human})},
\text{log:implies},
\GT{}{y}{(tom, \text{:hasMother}, y)\})}
)}
\end{multline}
In this example, we replace $x$ with its value because it occurs freely in
the graph while $y$ does not get replaced since it is quantified in the
nested graph.


\begin{definition}[Instance and generalisation of a graph]
\label{def:instance}
Let $G=(U, E, F)$ be a graph.

We call a graph $G'$ an \emph{E-instance} of $G$ if there exists a mapping 
$M:X\subseteq E\rightarrow T_G \cup E$ such that $G'$  is the normalisation of $(U, E, M^t(F))$.
Conversely, we call $G$ an \emph{E-generalisation} of $G'$.

We call a graph $G''$ a \emph{U-instance} of $G$ if there exists a mapping 
$M:X\subseteq U\rightarrow T_G \cup U $ such that 
$G''$  is the normalisation of $(U, E, M^t(F))$.
Conversely, we call $G$ a \emph{U-generalization} of $G''$.
\end{definition}

For example:
\begin{itemize}
\item
$\G{}{}{(\text{:alice}, \text{:likes}, \text{:pizza})}$ is an E-instance of $\G{}{x}{(x, \text{:likes}, \text{:pizza})}$, with $M = \{{x\mapsto \text{:alice}}\}$
\item
$\G{}{x}{(x, \text{:likes}, \text{:pizza}), (x, \text{:likes}, \text{:pasta})}$ is an E-instance of $\G{}{x, y}{(x, \text{:likes}, \text{:pizza}), (y,\allowbreak \text{:likes}, \text{:pasta})}$, with $M = \{y \mapsto x\}$
\item
$\G{}{}{(\text{:alice}, \text{:likes}, \text{:pizza})}$ is a U-instance of $\G{x}{}{(x, \text{:likes}, \text{:pizza})}$, with $M = \{{x\mapsto \text{:alice}}\}$
\item
 $\G{x}{}{(x, \text{:likes}, x)}$ is a U-instance of $\G{x, y}{}{(x, \text{:likes}, y)}$ with $M = \{y \mapsto x\}$.
\item
$\G{}{x}{(x, \text{:likes}, \text{:alice})}$ is an E-generalization of $\G{}{}{(\text{:bob}, \text{:likes}, \text{:alice})}$, which is itself a U-instance of $\G{y}{}{(y, \text{:likes}, \text{:alice})}$,
\end{itemize}

We will see below~(Section~\ref{sec:interpolation-lemma}) that these definitions allow us to formulate a counterpart to RDF's interpolation lemma~\cite[Section 5.2]{RDFsemantics}. In particular, note that our notion of E-instance extends the notion of \emph{instance} of an RDF graph. In contrast to RDF, N3 also covers universal quantification, hence the need to differentiate two kinds of instances/generalisations.

\begin{definition}[Merge of graphs]\label{def:merge}
Let $G_1=(U_1,E_1,F_1)$ and $G_2=(U_2,E_2,F_2)$ be two graphs such that $U_1 \cap U_2 = E_1 \cap E_2 = FV(G_1) \cap FV(G_2) = \emptyset$. By $G_1\sqcup G_2:=(U_1 \cup U_2,E1 \cup E2,F_1 \cup F_2)$ we denote the disjoint union of $G_1$ and $G_2$.

Let $\{G_1, \dots, G_n\}$ be a set of graphs. We call a graph $G$ a \emph{merge} of $G_1, \dots, G_n$ if, for all $i \in [1, n]$ there exist a graph $G'_i$ isomorphic to $G_i$, such that $G = G'_1 \sqcup \dots \sqcup G'_n$. 
\end{definition}

This definition extends that of the \emph{merge} of RDF graphs as defined in Section~4.1 of RDF Semantics~\cite{RDFsemantics}.

\begin{definition}[Closure of a graph]
A graph $G$ is said to be \emph{closed} if it does not contain any free variable, i.e. $FV(G)=\emptyset$.
Given a graph $G = (U, E, F)$, the closed graph $(U \cup FV(G), E, F)$ is called the \emph{closure} of $G$.
\end{definition}

As we will see in the following section, any graph is semantically equivalent to its closure.

\section{Semantics}\label{semantics}
Having introduced N3's abstract syntax  in the previous section, we now define N3's semantics. We do that in two steps: we first define N3's base semantics which only covers the meaning of terms and variables but does not take any special predicates into account. After that, we discuss N3 semantics which assigns meaning to N3's built-in predicates~(Section~\ref{n3-entailment}). Here, we limit our considerations to the most important predicates~(including \texttt{log:implies}). We plan to extend the definitions in future work.
\hypertarget{base-semantics}{%
\subsection{Base Semantics}\label{base-semantics}}
To define N3's base semantics, we start with an interpretation function:
\begin{definition}[Interpretation]\label{def:interpretation}
An {interpretation} \(I\) is defined by:
\begin{itemize}
\item
  a set \(\Delta_I\)~({domain}) whose elements are referred to as \emph{resources},
\item
  a mapping \(D_I\)~({denotation}) from \(C ∪ L\) to \(\Delta_I\), such that
  its restriction to \(C\) is a total mapping,
\item
  a mapping \(Q_I\) from {ground} {graph terms} to ground graph terms,
  such that
  \begin{itemize}
  \item
    for any graph term \GTW{G}: \(Q_I(\GTW{G})=\GTW{H} \implies H \simeq G\)
  \item
    for two graph terms \(\GTW{G_1}\) and \(\GTW{G_2}\):\\
    \(G_1 \simeq G_2\implies Q_I(\GTW{G_1})=Q_I(\GTW{G_2})\)
  \end{itemize}
\item
  a subset \(EXT_I\)~({extension}) of \(\Delta_I^3\).
\end{itemize}
\end{definition}
This definition is generally ``compatible'' with RDF simple
interpretations, with the following superficial changes:
\begin{itemize}
\item
  IR becomes \(\Delta_I\),
\item
  IP is assumed, without loss of generality, to be a subset of \(\Delta_I\),
\item
  IS and IL have been merged in \(D_I\),
\item
  IEXT can be derived from \(EXT_I\); formally:
  \(IEXT(p) = \{(s,o)|(s,p,o) ∈ EXT_I\}\); conversely:
  \(EXT_I = \bigcup_{p∈IP} \{(s,p,o)| (s,o) ∈ IEXT(p)\}\).
\end{itemize}
The interpretation maps each ground graph term to an isomorphic copy of itself. In that sense, ground graph terms are similar to literals in RDF~(they have a ``fixed'' denotation). As a consequence, two non-isomorphic graphs cannot have the same meaning.

%
%
%
However,  N3-entailment(see Section~\ref{n3-entailment}) provides mechanisms to ``unquote''
graphs and to compare them based on the meaning of the formula they quote.


We now inductively define how to interpret {ground} graphs with an interpretation $I$:
\begin{definition}[Application of an interpretation]
For ground terms, triples, and graphs, we define the application of an interpretation $I$ as follows:
\TabPositions{5.5cm}
\begin{itemize}
\item
  if \(x\) is an IRI or a Literal \tab
  \(I(x)=D_I(x)\)
\item
  if $x$ is a ground graph term \tab
  \(I(x)=Q_I(x)\)
\item
  if $x=(s, p, o)$ is a ground triple \tab
  \(I(x)=true\) if \((I(s), I(p) ,I(o))\in EXT_I\),

  \tab $false$ otherwise
\item
  if $x = F$ is a set of ground triples \tab
  $I(x)=false$ if $I(t)=false$ for some $t \in F$,

  \tab $true$ otherwise
\item
  if $x=(\emptyset, \emptyset, F)$ is a ground graph \tab
  $I(x)=I(F)$
\end{itemize}
\end{definition}

The definition implies that the empty graph is true under any interpretation $I$.


In order to extend the definition towards non-ground graphs,
we rely on a few additional definitions. To better understand the need
for these, consider the following example:
\begin{equation}
\texttt{?x\ :says \ \{?x\ :is\ :wise\}.}
\end{equation}
which freely translates to \emph{``everyone says about themselves that they are wise''}.
In abstract syntax, the formula would be expressed as:
\begin{equation}\label{eq:universalWisdom}
\G{x}{}{(x, \text{:says}, \GT{}{}{(x, \text{:is}, \text{:wise)})}}
\end{equation}
This formula is interesting because the universally quantified variable appears once as a {direct constituent} in the triple and once nested.
The semantics for ground graphs maps direct constituents to~(not further specified) domain elements, while ground nested graphs are mapped to an isomorphic version of themselves.

If we replace the variable \(x\) with,
for example, \texttt{:bob}, we get
\begin{equation}\label{eq:wiseBob}
\G{}{}{(\text{:bob}, \text{:says}, \GT{}{}{(\text{:bob}, \text{:is}, \text{:wise)})}}
\end{equation}
With an interpretation $I$: \\
$\begin{aligned}
&   I(\G{}{}{(\text{:bob}, \text{:says}, \GT{}{}{(\text{:bob}, \text{:is}, \text{:wise)})})}) = \texttt{true}
 \quad \text{(interpretation of the graph)}\\
  \text{if}\quad
 &  I(~~(\text{:bob}, \text{:says}, \GT{}{}{(\text{:bob}, \text{:is}, \text{:wise)})})~)
  = \texttt{true} \quad\text{(interpretation of the triple), which holds}\\
  \text{if}\quad
  &  (D_I(\text{:bob}), D_I(\text{:says}), Q_I(\GT{}{}{(\text{:bob}, \text{:is}, \text{:wise)})}))
    \in  EXT_I 
\end{aligned}$

Note that the mapping \(D_I\) maps the URI \texttt{:bob} directly to a
resource of \(Δ_I\), while \(Q_I\) operates on the graph term which
contains the URI \texttt{:bob}. If we go back to Formula~\ref{eq:universalWisdom}
where we have a variable $x$ instead of \texttt{:bob},
the first occurrence of $x$ should stand for a resource,
while the second occurrence~(in the graph term) depends on the syntactical representation of that resource.
Note that this diverges from how RDF deals with variables, i.e.
blank nodes~\cite[Section 5.1]{RDFsemantics}.

\vspace{1em}
While RDF semantics maps blank nodes to resources,
our semantics for N3 maps variables to ground terms,
which are then, depending on the context,
mapped to resource via an interpretation $I$,
or kept as is.
While this shift is not neutral
(the domain may contain unnamed elements, which are therefore unreachable by quantified variables),
we will prove later that it has no significant impact on basic entailment:
N3 basic entailment on RDF graphs is equivalent to RDF simple entailment,
despite this different treatment of existentials~(Lemma~\ref{lemma:simpleEntailment} below).
The difference is likely to matter, however, with more complex inference regimes introducing negation~\cite{Genesereth2015}.

\begin{definition}[Interpretation with assignment]
Let \(I\) be an interpretation and
 $A:V\rightarrow T_G$  a mapping~(assignment) from $V$ to $T_G$.
We define the application \(I[A](x)\)  on terms, triples, and graphs as follows:
\TabPositions{4.5cm}
\begin{itemize}
\item
  if $x$ is a variable \tab
  \(I[A](x)=I(A(x))\)
\item
  if \(x\) is an IRI or a Literal \tab
  \(I[A](x)=D_I(x)\)
\item
  if $x$ is a graph term \tab
  \(I[A](x)=Q_I(A^t(x))\)
\item
  if $x=(s, p, o)$ \tab
  \(I[A](x)=true\) if \((I[A](s), I[A](p) ,I[A](o))\in EXT_I\),

  \tab $false$ otherwise
\item
  if $x = F$ is a set of triples \tab
  $I[A](x)=false$ if $I[A](t)=false$ for some $t \in F$,

  \tab $true$ otherwise
\item
  if $x=(U, E, F)$ is a graph \tab
  $I[A](x)=true$ if for each mapping $B:U\rightarrow T_G$

  \tab there exists a mapping  $C:E\rightarrow T_G$

  \tab such that  $I[A\bullet B\bullet C](F)=true$,

  \tab $false$ otherwise
\end{itemize}
\end{definition}

When $x$ is a graph, $A$ will only affect the free variables in $x$
(quantified variables will be handled by $B$ and $C$).
Then if $x$ is a closed graph~(i.e. without free variables), $I[A](x)$ is not influenced by $A$.
In such cases, we can omit the assignment $A$ and simply write $I(x)$.

\begin{definition}[Basic models]\label{def:BM}
Let  $I$ be an interpretation and  $G$ a graph.
We call $I$ a \emph{basic model} of $G$, noted $I\models_b G$,
if $I[A](G)=true$ for all mappings $A:V\rightarrow T_G$.
\end{definition}

The above definition handles all variables occurring freely in a graph
as universally quantified.
Note that the interpretation of a variable \(x\) depends on its position in a graph.
In the expression \((x, \text{:p}, \GT{}{}{(x, \text{:q}, \text{:r})})\),
the first \(x\) is interpreted as \(I(A(x))\)~(an element of the domain)
while the second is interpreted as \(A(x)\)~(a ground term).



\begin{definition}[Basic entailment]\label{def:basicEntailment}
Let $G$ and $H$ be graphs. We say that $G$ b-entails $H$, written as $G\models_b H$ iff for each basic model $I$ of $G$, i.e. such that $I\models_b G$, it holds that $I\models_b H$.
\end{definition}

We give some examples for b-enatilemnt. Formula~\ref{eq:universalWisdom} from above b-entails Formula~\ref{eq:wiseBob} and
\begin{equation} \label{eq:ent3}		  \G{}{x}{
		(x, \text{:says}, \GT{}{}{(x, \text{is}, \text{:wise})})
	}
\end{equation}
but not
\begin{equation}
\label{eq:ent2a}		  \G{x}{}{
		(\text{x}, \text{:says}, \GT{}{}{(\text{:bob}, \text{is}, \text{:wise})})
	}
\end{equation}

Formula~\ref{eq:ent2a} b-entails Formulae~\ref{eq:wiseBob} and~\ref{eq:ent3}, and Formula~\ref{eq:wiseBob} b-entails Formula~\ref{eq:ent3}. 
Note that basic entailment does not support rules. Rule~\ref{eq:humanMother} from above

\begin{equation}\tag{\ref{eq:humanMother}}
	\G{x}{}{(
		\GT{}{}{(x, \text{rdf:type}, \text{:Human})},
		\text{log:implies},
		\GT{}{y}{(x, \text{:hasMother}, y)}  
		)}
\end{equation}
b-entails 
\begin{equation}\label{eq:humanMotherInst}
	\G{}{}{(
		\GT{}{}{(\text{:tom}, \text{rdf:type}, \text{:Human})},
		\text{log:implies},
		\GT{}{y}{(\text{:tom}, \text{:hasMother}, y)}  
		)}
\end{equation}
but even in combination with
\begin{equation}\label{eq:tom}
\G{}{}{(\text{:tom}, \text{rdf:type}, \text{:Human})}
\end{equation}
we cannot b-entail that 
\begin{equation}\label{eq:tomsMother}
\G{}{y}{(\text{:tom}, \text{:hasMother}, y)}.  
\end{equation}
This kind of entailment is only supported by N3-entailment which  will be discussed below in Section~\ref{n3-entailment}.

As basic entailment on its own does not support concepts like rules or negation, it can not be used to construct contradictions and all N3 graphs have a basic model.

\begin{lemma}[Basic satisfiability]
For each N3 graph there exists a basic model.
\end{lemma}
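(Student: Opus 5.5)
The plan is to exhibit a single interpretation that makes every triple true, so that every graph is satisfied regardless of quantifiers and assignments. Concretely, I would take a nonempty domain $\Delta_I$, for instance the singleton $\{r\}$, or, to be safe with respect to graph terms, any nonempty set. I would let $D_I$ map every IRI and literal to $r$. For $Q_I$ I would choose the identity on ground graph terms composed with a canonical choice per isomorphism class: fix, for every $\simeq$-equivalence class of ground graph terms, one representative, and let $Q_I(\GTW{G})$ be the representative of the class of $G$. By Lemma~\ref{lemma:eq}, $\simeq$ is an equivalence relation, so this is well defined. Both required conditions then hold: the image is isomorphic to the argument, and isomorphic arguments have the same image. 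Finally I would set $EXT_I=\Delta_I^3$.

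One point needs care. The definition requires $I(x)=Q_I(x)$ for ground graph terms, and $EXT_I\subseteq\Delta_I^3$ must contain triples whose components are these values. So $\Delta_I$ must contain the images of $Q_I$; otherwise a triple whose subject is a graph term would never be in $EXT_I$. I would therefore take $\Delta_I=\{r\}\cup T_G$, or simply $\Delta_I = T_G\cup L$, with $D_I$ the identity on $C\cup L$. This way every value $I[A](s)$ lies in $\Delta_I$, and with $EXT_I=\Delta_I^3$ every triple evaluates to true. This choice of domain is the only step I would expect to need attention.

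Then I would prove, by a straightforward induction on the structure of graphs, that $I[A](x)=true$ for every triple, set of triples and graph $x$ and every assignment $A:V\rightarrow T_G$. For a triple, all three components evaluate into $\Delta_I$, so the triple is in $EXT_I=\Delta_I^3$. A set of triples is true because none of its triples is false. For a graph $(U,E,F)$, given any $B:U\rightarrow T_G$, I need some $C:E\rightarrow T_G$. Since $T_G$ is nonempty (it contains the IRIs, and at worst the ground graph term $\GTW{(\emptyset,\emptyset,\emptyset)}$), such a $C$ exists, and then $I[A\bullet B\bullet C](F)=true$ by the triple case.

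Note that the induction never needs to descend into graph terms, since they are evaluated through $Q_I$ as values rather than as truth values. Hence $I\models_b G$ for every graph $G$ by Definition~\ref{def:BM}, which proves the lemma.
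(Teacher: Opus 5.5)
Your proposal is correct and is essentially the paper's own proof. The paper takes $\Delta_I=T_G$, $D_I$ the identity, $EXT_I=\Delta_I^3$ and $Q_I$ mapping each ground graph term to a chosen isomorphic representative. Your domain $T_G\cup L$ coincides with $T_G$, since literals are already ground terms. Your extra remarks only spell out details the paper leaves implicit: well-definedness of $Q_I$ via Lemma~\ref{lemma:eq}, the need for $\Delta_I$ to contain the image of $Q_I$, and non-emptiness of $T_G$ so that an existential witness exists.
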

\begin{proof}
Consider the interpretation with $\Delta_I=T_G$, $D_I=\text{id}$ the identity mapping, $EXT_I=\Delta_I^3$ and $Q_I$ mapping all ground graph terms to one of their isomorphic representations~(such as required in Definition~\ref{def:interpretation}). This interpretation is a model for all graphs.
\end{proof}

For the closure of a graph we furthermore observe the following:
\begin{lemma}[Closure]
Each N3 graph has the same models as its closure.
\end{lemma}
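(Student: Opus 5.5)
The plan is to unfold Definition~\ref{def:BM} on both sides and compare the two quantifier prefixes directly. Let $G=(U,E,F)$ and let $G'=(U\cup FV(G),E,F)$ be its closure. The closure is well formed: $FV(G)=FV(F)\setminus(U\cup E)$ is disjoint from $U$ and from $E$, so $U\cup FV(G)$ and $E$ are disjoint. Write $W=FV(G)$. Fix an interpretation $I$. We must show that $I[A](G)=true$ for all $A:V\rightarrow T_G$ holds exactly when $I[A'](G')=true$ for all $A'$. Since $G'$ is closed, it suffices to show that it holds exactly when $I(G')=true$.

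The first step is a locality lemma: the value of $I[A](x)$ depends only on $A|_{FV(x)}$. We prove it by structural induction on terms, triples, sets of triples and graphs, following the definition of interpretation with assignment.
\begin{itemize}
\item For variables, IRIs and literals it is immediate.
\item For a graph term $\GTW{H}$ we have $I[A](\GTW{H})=Q_I(A^t(\GTW{H}))$. By Definition~\ref{def:total-application}, $A^t$ only substitutes the variables in $FV(H)$, since it restricts the mapping to $FV$ at every graph level. So two assignments that agree on $FV(H)$ produce the same term.
\item For triples and sets of triples the claim follows directly from the induction hypothesis.
\item For a graph $(U_0,E_0,F_0)$, the combination $A\bullet B\bullet C$ lets $B$ and $C$ override $A$ on $U_0\cup E_0$. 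Hence $A\bullet B\bullet C$ and $A'\bullet B\bullet C$ agree on $FV(F_0)$ whenever $A$ and $A'$ agree on $FV(F_0)\setminus(U_0\cup E_0)$.
\end{itemize}

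The second step compares the prefixes. The statement $I\models_b G$ reads: for every $A$ and every $B:U\rightarrow T_G$ there is a $C:E\rightarrow T_G$ with $I[A\bullet B\bullet C](F)=true$. The statement $I(G')=true$ reads: for every $B':U\cup W\rightarrow T_G$ there is a $C$ with $I[B'\bullet C](F)=true$. The assignment here can be taken arbitrary, by the locality lemma, since $FV(G')=\emptyset$.

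For the direction from $G$ to $G'$, take $B'$. Choose any $A$ extending $B'|_W$ and set $B=B'|_U$. Then $A\bullet B\bullet C$ and $B'\bullet C$ agree on $U\cup W\cup E\supseteq FV(F)$, so the locality lemma gives the same truth value and the same $C$ works.

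For the converse, given $A$ and $B$, take $B'=A|_W\cup B$. This is well defined because $W\cap U=\emptyset$. Again $A\bullet B\bullet C$ and $B'\bullet C$ coincide on $FV(F)$, so the witness $C$ for $B'$ serves for $(A,B)$.

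The main obstacle is the graph-term case of the locality lemma. One has to check carefully that $A^t$ never touches variables bound in nested graph terms, and that restricting to $FV$ in the definition of total application makes $A^t(\GTW{H})$ depend only on $A|_{FV(H)}$. This requires a small auxiliary induction on the nesting depth of graph terms, using $FV(\GTW{H})=FV(F_H)\setminus(U_H\cup E_H)$. Once that is settled, the rest is bookkeeping with the combination operator $\bullet$ and the disjointness of $U$, $E$ and $W$.
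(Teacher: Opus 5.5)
Your proof is correct and takes the same route as the paper: in both directions you set $B' = A|_{FV(G)}\cup B$ and reuse the existential witness $C$, which is exactly the paper's choice $B_2 = A_1|_{FV(G_1)}\bullet B_1$. The only additions are that you check the closure is well formed and state explicitly the locality property ($I[A](x)$ depends only on $A|_{FV(x)}$), which the paper uses without saying so. Its graph-term case needs no auxiliary induction: $A^t(\GTW{H})$ is by definition computed from $A|_{FV(H)}$ alone.
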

\begin{proof}
Let $G_1 = (U, E, F)$ be an N3 graph and $G_2 = (U\cup FV(G_1), E, F)$ its closure.
Let $I \models_b G_1$. We show that for all $B_2: U\cup FV(G_1) → T_G$ there exists $C_2: E → T_G$ such that $I[B_2\bullet C_2](F) = \text{true}$. As $I \models_b G_1$, we know that for $A_1: V → T_G$ and $B_1: U → T_G$, there exists $C_1: E → T_G$. If we chose $A_1$ and $B_1$ such that $B_2 = A_1|_{FV(G_1)} \bullet B_1$, we set $C_2 := C_1$, then we have $I[B_2\bullet C_2](F) = \text{true}$ and therefore $I \models_b G_2$.

Conversely, let $I \models_b G_2$. We show that for all $A_1: V → T_G$ and $B_1: U → T_G$ there exists $C_1: E → T_G$ such that $I[A_1\bullet B_1\bullet C_1](F) = \text{true}$. As $I \models_b G_2$, we know that for $B_2 := A_1|_{FV(G_1)} \bullet B_1$, there exists $C_2: E → T_G$. We set $C_1 := C_2$, then we have $I[A_1\bullet B_1\bullet C_1](F) = \text{true}$ and therefore $I \models_b G_1$.
\end{proof}

From the lemma above, we conclude that we can, without loss of generality, focus on closed graphs for the rest of the proofs. Any subsequent result is straightforwardly extensible to non-closed graph by first taking their closure.

\vspace{1em}
For pure RDF graphs~(i.e. graphs without graph terms and universal quantification) b-entailmant coincides with RDF simple entailment~\cite[Section 5.2]{RDFsemantics}. For an RDF graph $G$ let $E$ be the set of blank nodes occurring in it.
We denote its N3 representation with explicit quantifiers as detailed in Section~\ref{sec:CtoA} as $G^{N3}:=(\emptyset, E, G)$. 
\begin{lemma}[Simple Entailment]\label{lemma:simpleEntailment}
For two RDF graphs $G_1$ and $G_2$ with blank nodes $E_1$ and $E_2$, respectively, the following holds:
\[G_1 \text{ simply entails } G_2 \quad \text{ iff } \quad G_1^{N3}=(\emptyset, E_1, G_1) \text{ b-entails } G_2^{N3}=(\emptyset, E_2, G_2) \]
\end{lemma}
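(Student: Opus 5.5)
The plan is to go through the RDF interpolation lemma. G_1 simply entails G_2 iff some instance of G_2 (blank nodes replaced by terms of G_1, which may include G_1's blank nodes) is a subgraph of G_1. Note that G_1^{N3} and G_2^{N3} are closed, and neither contains graph terms or universals, so Q_I plays no role. Also, since EXT_I determines IEXT, an N3 interpretation I yields an RDF simple interpretation and conversely, once we put IP := Δ_I.

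For the direction from simple entailment to b-entailment, assume G_1 simply entails G_2. By the interpolation lemma there is a mapping σ from E_2 to terms of G_1 such that σ(G_2) ⊆ G_1. The terms of G_1 are IRIs, literals, and blank nodes in E_1. Now let I be a basic model of G_1^{N3}. Then there is C_1: E_1 → T_G with I[C_1](G_1) = true. Define C_2 := C_1^t ∘ σ on E_2. Since I[C_1^t ∘ σ](t) = I[C_1](σ(t)), the assignment C_2 is ground-valued and every triple of G_2 is mapped to true. This gives I ⊨_b G_2^{N3}. One detail: a literal that I's D_I leaves undefined must be handled exactly as in RDF, which is routine.

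For the converse, I would not translate arbitrary RDF models. Instead I would use a Herbrand-style argument that mirrors the RDF proof of the interpolation lemma. Assume G_1^{N3} b-entails G_2^{N3}. Build the interpretation I_0 with domain Δ = T_G, D = id, and Q arbitrary but admissible (the one from the basic satisfiability lemma). Let EXT be the set of triples of σ_0(G_1), where σ_0 replaces each blank node in E_1 by a fresh IRI not occurring in G_1 or G_2. I_0 is then a basic model of G_1^{N3}, witnessed by C(e) = σ_0(e). Hence I_0 ⊨_b G_2^{N3}, so some C: E_2 → T_G makes every triple of G_2 land in EXT. Therefore C^t(G_2) ⊆ σ_0(G_1). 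Composing with σ_0^{-1} on the fresh IRIs gives an instance of G_2 that is a subgraph of G_1. The fresh IRIs do not occur in G_2, so this inverse is well defined on C's image. The interpolation lemma then yields simple entailment.

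I expect the main obstacle to be the mismatch between RDF's treatment of blank nodes (any domain element) and N3's (ground terms only). The Skolemisation-by-fresh-IRIs construction is exactly what avoids needing a semantic translation between the two, provided the supply of IRIs is infinite. I would also double-check two boundary points: ill-typed or partially interpreted literals in D_I, and the requirement that IEXT be defined only on IP.
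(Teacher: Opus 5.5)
Your proof is correct, but it takes a different route from the paper's. The paper never invokes the RDF interpolation lemma; it proves both directions by transforming models.
\begin{itemize}
\item For ($\Rightarrow$), it takes a basic model $I$ of $G_1^{N3}$ and restricts the domain to the named resources $Img(D_I)$. The RDF witness $A_2\colon E_2\to\Delta_{I'}$ supplied by simple entailment can then be pulled back to ground terms through a naming function $\nu$.
\item For ($\Leftarrow$), it takes an arbitrary simple model of $G_1$ and replaces every unnamed resource used by the blank-node witness with a distinct graph term $\underline{\{(c,c,c)\}}$ (the map $\delta$). This yields a basic model of $G_1^{N3}$, and the RDF witness for $G_2$ is read back from the N3 one.
\end{itemize}
You instead reduce both directions to the syntactic characterisation. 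For ($\Rightarrow$) you compose the interpolation map $\sigma$ with the existential witness $C_1$; for ($\Leftarrow$) you use a single Skolemised Herbrand model and read off an instance of $G_2$ inside $G_1$.

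Your route is shorter. In ($\Rightarrow$) it also works directly with $I$, so it avoids the care the paper's restriction to $Img(D_I)$ needs when a witness $B_1(e)$ is a graph term whose $Q_I$-image is not named by any IRI or literal. The paper's route is self-contained, with no external lemma used as a black box. It also shows directly how RDF's domain-valued assignments and N3's ground-term-valued assignments convert into each other, which is the conceptual point the lemma is meant to make.

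Two details in your ($\Leftarrow$) need tightening.
\begin{itemize}
\item State explicitly that $I_0$ is the identity on IRIs and literals. A graph-term value $C(x)$ would put a graph term (its $Q$-image) into a triple, and no such triple lies in $EXT$. This is what justifies $C^t(G_2)\subseteq\sigma_0(G_1)$.
\item ``The supply of IRIs is infinite'' is not the right proviso. You need $|E_1|$ distinct IRIs occurring in neither $G_1$ nor $G_2$, and these may not exist for infinite graphs, e.g.\ if $G_1$ mentions every IRI. The fix is to Skolemise with pairwise distinct graph terms $\underline{\{(c,c,c)\}}$, chosen as their own $Q$-representatives. RDF graphs never contain graph terms, which is exactly why the paper's $\delta$ uses them. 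The cost is that, when $C(x)$ is a graph term, you read back $\sigma_0^{-1}(Q(C(x)))$. Only the cardinality condition $|E_1|\le|C|$ from the paper's footnote then remains.
\end{itemize}
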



\begin{proof}
In the following, we note RDF simple entailment as $\models_s$,
to clearly differentiate it from Notation3 basic entailment, still noted $\models_b$.
We also use RDF simple interpretations and Notation3 interpretations  interchangeably, since we showed after Definition~\ref{def:interpretation} that they can easily be transformed into each other.

$\Rightarrow$
Assume $G_1 \models_s G_2$.
Let $I \models_b G_1^{N3}$, i.e. there exists $B_1: E_1 → T_G$ such that $I[B_1](G_1^{N3})=\text{true}$.
We build a new model $I'$ such that $\Delta_I' = Img(D_I)$, and $EXT_{I'}$ is the restriction of $EXT_I$ to this new domain.
Clearly $I'[B_1](G_1^{N3})=\text{true}$, so $I' \models_b G_1^{N3}$.
We also have $I' \models_s G_1$, choosing $A_1(e) = I'(B_1(e))$.
Per our assumption, $I' \models_s G_2$:
there exists $A_2: E_2 → \Delta_{I'}$ such that $[I'+A_2](G_2)=\text{true}$.
By construction, all elements $r$ of $\Delta_{I'}$ are named, so there exists a function $\nu: \Delta_{I'} → T_G$ such that $I'(\nu(r)) = r$.
So we can build $B_2: E→T_G$ defined by $B_2(e) = \nu(A_2(e))$,
and therefore $I'[B_2](G_2^{N3})=\text{true}$.
It follows, by construction of $I'$, that $I[B2](G_2^{N3})=\text{true}$,
and therefore $I \models_b G_2^{N3}$.

$\Leftarrow$
Assume $G_1^{N3} \models_b G_2^{N3}$.
Let $I \models_s G_1$, i.e. there exists $A_1: C → \Delta_I$ such that $[I+A_1](G_1)=\text{true}$.
Without loss of generality, we assume that $IR$ does not contain any graph term.
We set $N_1 := \{e \in E_1 \,|\; \exists t \in C \cup L, D_I(t)=A_1(e)\}$~(the set of existentials that $A_1$ maps to a \emph{named} individual), and $O_1 := E_1 \setminus N_1$.
We can define an injection\footnote{
  For such an injection to exists, we must assume that $|O_1|$ is not greater than $|C|$. This is obviously true for finite graphs or countably infinite graphs, as $C$ contains an countably infinite amount of IRIs. We note that nothing in the definition of RDF constrains the set of blank nodes to be countable, so a graph using an uncountable amount of blank nodes is theoretically possible and would break this assumption.
  We consider, however, that this corner case has no practical relevance.
} $\delta$ which assigns to every resource in $A_1(O_1)$ a graph term of the form $\GTW{}{}{(c, c, c)}$ with $c \in C$.
We define the interpretation $I'$ as follows:
\begin{itemize}
\item $\Delta_{I'} = \Delta{I} \setminus A_1(O_1) \cup \delta(A_1(O_1))$
\item $D_{I'} = D_I$
\item $Q_{I'} =$ identity function on $\delta(A_1(O_1))$
\item $EXT_{I'} = \{(\delta^*(s),p,\delta^*(o)) \,|\; (s,p,o) \in EXT_I \}$ where $\delta^*(x) = \begin{cases}
\delta(r) & \text{if } r \in A_1(O_1)\\
r & \text{otherwise}
\end{cases}$
\end{itemize}
In other words, $I'$ is obtained by replacing in $I$ all resources $r \in A_1(O_1)$ by $\delta(r)$.
Clearly, $[I'+\delta^* \circ A_1](G_1) = \text{true}$, so $I' \models_s G_1$.
We also have $I' \models_b G_1^{N3}$ by choosing \[
B_1(e) = \begin{cases}
t \in C \cup L \text{ such that } D_I(t) = A_1(e) & \text{if } e \in N_1\\
\delta(A_1(e)) & \text{if } e \in O_1
\end{cases}
\]
Per our assumption, $I' \models_b G_2^{N3}$:
there exists $B_2: E_2 → T_G$ such that $I'[B_2](G_2^{N3}) = \text{true}$.
We can build $A_2: E_2 → IR$ as follows:
\[
A_2(e) = \begin{cases}
D_{I'}(B_2(e)) & \text{if } B_2(e) \in C \cup L\\
B_2(e) & \text{if } B_2(e) \text{is a graph term}
\end{cases}
\]
$[I+A_2](G_2) = \text{true}$ by construction, and therefore $I \models_s G_2$.
\end{proof}

\subsection{Semantic relations between graphs}\label{ssec:relations}
Having introduced basic models, we now discuss the semantic implications of the  syntactic relations between graphs we introduced in Sections~\ref{isomorphisms} and~\ref{other_graph_relationships}. We start with isomorphic graphs and show:

\begin{lemma}[Isomorphic graphs]\label{lemma:isographs}
If two closed graphs are isomorphic to each other, they share the same models.
\end{lemma}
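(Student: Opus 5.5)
The plan is to prove a stronger statement by structural induction on terms, triples and graphs. We then specialise it to closed graphs. The auxiliary claim is: if $x_1$ is isomorphic to $x_2$ under a bijection $M$, and $A_1, A_2: V \rightarrow T_G$ are assignments with $A_2(M(v)) = A_1(v)$ for all $v \in dom(M)$, then $I[A_1](x_1) = I[A_2](x_2)$ for every interpretation $I$. Here $x_1, x_2$ may be terms, triples, sets of triples or graphs. For closed graphs $FV(G_1)=FV(G_2)=\emptyset$, so the assignments are irrelevant, $I(G_1)=I(G_2)$, and the two graphs have the same basic models (Definition~\ref{def:BM}). We may also first replace $G_i$ by $G_i^N$. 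Quantifying a variable that does not occur free in $F$ has no effect on $I[A](G)$, since $B$ or $C$ assign it a value that is never read. Hence $I[A](G)=I[A](G^N)$.

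\textbf{Base cases and triples.} For an IRI or literal both sides equal $D_I(x)$. For a variable $v$ we get $I[A_2](M(v)) = I(A_2(M(v))) = I(A_1(v)) = I[A_1](v)$. For triples, the bijection $T$ from the definition of isomorphism pairs components that are isomorphic under $M$, so the induction hypothesis yields equal truth values triple by triple. For sets of triples, $T$ is a bijection, so one set contains a false triple iff the other does.

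\textbf{Graphs.} Let $G_1^N=(U_1,E_1,F_1)$ be isomorphic to $G_2^N$ under $M$, with $M(U_1)=U_2$ and $M(E_1)=E_2$. Given $B_2: U_2\rightarrow T_G$, define $B_1 := B_2\circ M|_{U_1}$. The hypothesis for $G_1$ gives $C_1$, and we set $C_2 := C_1\circ M^{-1}|_{E_2}$. The combined assignments $A_1\bullet B_1\bullet C_1$ and $A_2\bullet B_2\bullet C_2$ are again related by $M$ on $FV(F_1)=U_1\cup E_1\cup FV(G_1)$. The induction hypothesis on $F_1,F_2$ then gives equal values, and the converse direction is symmetric via $M^{-1}$ (Lemma~\ref{lemma:eq}).

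\textbf{Graph terms (main obstacle).} Here $I[A_i](\GTW{H_i}) = Q_I(A_i^t(\GTW{H_i}))$. We know $H_1$ is isomorphic to $H_2$ under $M|_{FV(H_1)}\bullet P$. By the properties of $Q_I$ in Definition~\ref{def:interpretation}, it suffices to show that the ground graphs $A_1^t(H_1)$ and $A_2^t(H_2)$ are isomorphic. The total application replaces each free variable $v$ of $H_1$ by $A_1(v)$, and $M(v)$ in $H_2$ by $A_2(M(v))=A_1(v)$, which is the same ground term. Scoped variables are untouched and remain related by $P$. So I would show, by a nested induction, that $A_1^t(H_1)$ is isomorphic to $A_2^t(H_2)$ under $P$, with the identity on the substituted ground terms.

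The delicate points are these:
\begin{itemize}
\item The ground terms substituted in, possibly graph terms containing their own scoped variables, are isomorphic to themselves by reflexivity.
\item Substitution does not capture variables, because $A^t$ only acts on $FV$ and ground terms have no free variables.
\item Normalisation commutes with substitution.
\end{itemize}
I expect this step, keeping the bookkeeping of $\bullet$, restrictions and normalisation straight, to carry most of the work.
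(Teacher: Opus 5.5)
Your proposal is correct and follows the same route as the paper: you transport the universal choice via $B_1 := B_2\circ M$ and the existential witness via $C_2 := C_1\circ M^{-1}$, and the graph-term case rests on $Q_I$ identifying isomorphic ground graph terms. Your invariant on $M$-related assignments, the reduction to normalised graphs, and the check that $A_1^t(H_1)\simeq A_2^t(H_2)$ spell out what the paper's short proof leaves implicit when it simply asserts $I[B_2\bullet C_2](F_2)=\text{true}$.
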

\begin{proof}
Given two closed graphs $G_1=(U_1, E_1, F_1)$ and $G_2=(U_2, E_2, F_2)$ where $G_1$ is isomorphic to $G_2$ under $M$. Let $I\models_b G_1$.

If $G_1$ and $G_2$ are ground~($U_1 = U_2 = E_1 = E_2 = \emptyset$), we simply need to show that $I(F_2) = I(F_1)$. This holds because $F_1$ is isomorphic to $F_2$ under $M$, and $Q_I$ maps any two isomorphic ground graph terms to the same image.

If $G_1$ and $G_2$ are not ground,
we show that for each mapping $B_2:U_2\rightarrow T_G$ there exists a mapping $C_2:E_2\rightarrow T_G$ such that $I[B_2\bullet C_2](F_2)=true$. 
As $I\models_b G_1$, we know that for $B_1:=B_2\circ M:U_1\rightarrow T_G$
there exists a mapping $C_1:E_1\rightarrow T_G$ such that $I[B_1\bullet C_1](F_1)=true$. We set $C_2:= C_1\circ M^{-1}$ 
, then $I[B2\bullet C_2](F_2)=\text{true}$ and therefore $I \models_b G_2$.
\end{proof}

We can furthermore show, that N3 graphs b-entail their subgraphs.

\begin{lemma}[Subgraphs]
\label{lemma:subgraphs}
A closed N3 graph b-entails any of its subgraphs.
\end{lemma}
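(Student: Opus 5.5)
Let $G_1=(U_1,E_1,F_1)$ be closed and $G_2=(U_2,E_2,F_2)\subseteq G_1$. By the definition of subgraph, $FV(G_2)\subseteq FV(G_1)=\emptyset$, so $G_2$ is closed as well. The plan is to fix an arbitrary interpretation $I$ with $I\models_b G_1$ and show $I\models_b G_2$. Since $G_2$ is closed, its truth value does not depend on the outer assignment, so it suffices to show the following: for every $B_2:U_2\rightarrow T_G$ there is some $C_2:E_2\rightarrow T_G$ with $I[B_2\bullet C_2](F_2)=true$.

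\textbf{Step 1 (extend the universal assignment).} Given $B_2$, I choose any $B_1:U_1\rightarrow T_G$ extending it, setting $B_1|_{U_2}=B_2$ and mapping $U_1\setminus U_2$ to some fixed ground term, e.g.\ an arbitrary IRI. Since $I\models_b G_1$, there is $C_1:E_1\rightarrow T_G$ with $I[B_1\bullet C_1](F_1)=true$. Because a set of triples is true iff each of its triples is true, and $F_2\subseteq F_1$, this gives $I[B_1\bullet C_1](F_2)=true$.

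\textbf{Step 2 (restrict back).} I set $C_2:=C_1|_{E_2}$, which is well defined since $E_2\subseteq E_1$. It remains to show
\[I[B_2\bullet C_2](F_2)=I[B_1\bullet C_1](F_2).\]
The two assignments agree on $U_2\cup E_2$. It therefore suffices to have an auxiliary fact: the value $I[A](x)$ of a term, triple or set of triples $x$ depends only on $A|_{FV(x)}$. This holds for the following reasons.
\begin{itemize}
\item For variables it is immediate.
\item For IRIs and literals the value does not depend on $A$ at all.
\item For a graph term $\GTW{H}$, the value is $Q_I(A^t(\GTW{H}))$. By the definition of total application, $A^t$ acts on $H=(U,E,F)$ only through $A|_{FV(H)}$, so the auxiliary fact follows by a structural induction on the nesting of graph terms.
\item Triples and sets of triples follow componentwise.
\end{itemize}

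\textbf{Main obstacle.} The one delicate point is that $FV(F_2)$ could a priori contain variables of $U_1\cup E_1$ outside $U_2\cup E_2$, on which $B_2\bullet C_2$ is undefined. This is ruled out by closedness of $G_2$: $FV(F_2)\setminus(U_2\cup E_2)=FV(G_2)=\emptyset$, so $FV(F_2)\subseteq U_2\cup E_2$. On that set both assignments coincide, which gives the equality above. (This is exactly why the subgraph definition requires $FV(G_2)\subseteq FV(G_1)$; compare the example $G_6$ earlier in the paper.) Since $B_2$ was arbitrary, $I\models_b G_2$, and hence $G_1\models_b G_2$.
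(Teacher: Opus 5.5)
Your proposal is correct and follows the paper's proof essentially step by step. Both arguments use the same chain: closedness of $G_2$ gives $FV(F_2)\subseteq U_2\cup E_2$, you extend $B_2$ to some $B_1$ on $U_1$, obtain $C_1$, pass from $F_1$ to $F_2\subseteq F_1$, and set $C_2:=C_1|_{E_2}$. The only difference is that you justify why $I[A](F_2)$ depends only on $A$ restricted to $FV(F_2)$, which the paper merely asserts. For graph terms this justification needs no induction, since the definition of total application already acts on $H$ only through $A|_{FV(H)}$.
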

\begin{proof}
Let $G_1=(U_1,E_1,F_1)$ be a closed N3 graph, and $G_2=(U_2,E_2, F_2)$ a subgraph of $G_1$.
We notice that $G_2$ is necessarily closed~(as $FV(G_2) \subseteq FV(G_1) = \emptyset$) and therefore $FV(F_2) \subseteq U_2 \cup E_2$.
We also notice that, if $B: U_1 → T_G$ and $C: E_1 → T_G$, $I[B\bullet C](F2) = I[B|_{U_2}\bullet C|_{E_2}](F2)$.
Let $I$ be an interpretation such that $I \models_b G_1$. We show that for all $B_2: U_2 → T_G$, there exists $C_2: E_2 → T_G$ such that $I[B_2\bullet E_2](F_2) = \text{true}$.
As $I \models_b G_1$, we know that for $B_1: U1 → T_G$ such that $B_1|_{U_2} = B_2$, there exists $C_1: E_1 → T_G$ such that $I[B_1\bullet C_1](F_1)=\text{true}$.
It also holds that $I[B_1\bullet C_1](F_2)=\text{true}$, as $F_2 \subseteq F_1$,
and so $I[B_1|_{U_2}\bullet C_1|_{E_2}](F_2)=\text{true}$.
We set $C_2 := C_1|_{E_2}$, then $I[B_2\bullet C_2](F_2)=\text{true}$ and therefore $I \models_b G_2$.
\end{proof}

Similarly to RDF simple entailment, we see that E-instances entail their generalisations.

\begin{lemma}[E-Instances]
\label{lemma:e-instances}
A closed N3 graph is b-entailed by any of its E-instances.
\end{lemma}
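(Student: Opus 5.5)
The plan is to unfold the definitions of E-instance and basic model directly, in the same style as the proofs of Lemma~\ref{lemma:isographs} and Lemma~\ref{lemma:subgraphs}.

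Let $G=(U,E,F)$ be closed, and let $G'$ be an E-instance of $G$. So $G'$ is the normalisation of $(U,E,M^t(F))$ for some mapping $M:X\subseteq E\rightarrow T_G\cup E$. Let $I\models_b G'$. Because $G$ is closed, $G'$ is closed as well: $M$ maps only into ground terms or into variables of $E$, so no new free variables appear. By the Closure lemma we may therefore ignore the outer assignment.

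The first step is to discard normalisation. Normalisation only removes quantified variables that do not occur freely in $F$, and assignments to such variables do not affect $I[\cdot](M^t(F))$. Hence $I$ is also a model of $(U,E,M^t(F))$. Write $G''=(U,E,M^t(F))$ for this graph.

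The second step, which is the core of the proof, is a substitution lemma. For every assignment $A$ and every term, triple or set of triples $x$ we have
\[I[A](M^t(x)) = I[A\bullet (A^t\circ M)](x).\]
Here $A^t\circ M$ sends $e\in X$ to $A^t(M(e))$: this is $M(e)$ itself if $M(e)$ is ground, and $A(M(e))$ if $M(e)\in E$. I would prove this by structural induction on $x$.
\begin{itemize}
\item For variables and IRIs/literals it is immediate from the definitions.
\item For triples and sets of triples it follows directly from the induction hypothesis.
\item The delicate case is a graph term $\GTW{H}$. Here we need $A^t(M^t(\GTW{H}))=(A\bullet(A^t\circ M))^t(\GTW{H})$, a purely syntactic commutation of total applications. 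Total application only replaces free variables, and $M$'s images are ground terms or variables of $E$, which are bound at the outer level and so are handled by $A$. The potential obstacle is capture: an image variable $e'\in E$ may also be re-quantified inside $H$. I expect this to be the main difficulty. I would handle it by first renaming the bound variables inside nested graph terms apart, which is justified by Lemma~\ref{lemma:isographs} together with the requirement that $Q_I$ identifies isomorphic graph terms.
\end{itemize}

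With the substitution lemma in hand, the conclusion is straightforward. Take any $B:U\rightarrow T_G$. Since $I\models_b G''$, there exists $C':E\rightarrow T_G$ with $I[B\bullet C'](M^t(F))=true$. Define $C:E\rightarrow T_G$ by $C(e)=(B\bullet C')^t(M(e))$ for $e\in X$, and $C(e)=C'(e)$ otherwise. This is a well-defined assignment into $T_G$, because $M(e)$ is either ground or a variable in $E$. By the substitution lemma, $I[B\bullet C](F)=I[B\bullet C'](M^t(F))=true$, using that $U$ and $E$ are disjoint so that $B\bullet C$ agrees with the combined assignment. Therefore $I\models_b G$, which proves the lemma.
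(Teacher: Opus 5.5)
Your construction of $C$ is the paper's $C_1$. Your substitution identity $I[A](M^t(x))=I[A\bullet(A^t\circ M)](x)$ is exactly the step the paper dispatches with ``by construction of $C_1$''. You are right that the difficulty lies in the graph-term case with variable capture, but your proposed repair does not work.

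Total application (Definition~\ref{def:total-application}) never renames bound variables, so any capture has already happened inside the \emph{given} E-instance $G'$. Renaming the bound variables of $G$ apart gives an isomorphic $\tilde G$. However, the normalisation of $(U,E,M^t(\tilde F))$ is in general not isomorphic to $G'$, so Lemma~\ref{lemma:isographs} and the condition on $Q_I$ transfer nothing back to $G'$. In fact, under the paper's definitions as written, both your identity and the lemma itself fail once capture occurs. The paper's proof therefore has the same hole.

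Here is a counterexample. Take the closed graph $G=\G{}{x,y}{(x,\text{:p},\GT{}{y}{(x,\text{:q},y)})}$ and $M=\{x\mapsto y\}$. The definitions give the E-instance $G'=\G{}{y}{(y,\text{:p},\GT{}{y}{(y,\text{:q},y)})}$, in which the substituted $y$ is captured by the inner quantifier. Let $\Delta_I=T_G$, let $D_I$ be the identity, and let $EXT_I=\{(\text{:c},\text{:p},Q_I(\GT{}{y}{(y,\text{:q},y)}))\}$.
\begin{itemize}
\item $I\models_b G'$, witnessed by $y\mapsto\text{:c}$.
\item $I\not\models_b G$. For every ground $t$, the graph inside $\GT{}{y}{(t,\text{:q},y)}$ is not isomorphic to the one inside $\GT{}{y}{(y,\text{:q},y)}$, since a ground term is never isomorphic to a variable. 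Hence $Q_I$ maps the two terms to different images.
\end{itemize}
Adding $(y,\text{:r},\text{:s})$ to $G$ and $(\text{:c},\text{:r},\text{:s})$ to $EXT_I$ makes $G$ normalised without changing the argument.

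What you can prove is the lemma under a no-capture hypothesis. One option is to require that no variable of $E$ is quantified again in a graph term nested in $G$. Alternatively, total application can be redefined to be capture-avoiding. Under either fix, your structural induction for the graph-term case goes through directly, with no renaming needed. The rest of your argument is then correct, including your explicit handling of normalisation and of $e\notin X$, which the paper glosses over.
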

\begin{proof}
Let $G_1 = (U, E_1, F_1)$ be a closed N3 graph, and $G_2 = (U, E_2, F_2)$ an E-instance of $G_1$ with the instantiation mapping $M: E_1 → T_G \cup E_2$.
Let $I \models_b G_2$. We show that for all $B: U → T_G$, there exists $C_1: E_1 → T_G$ such that $I[B\bullet C_1](F_1) = \text{true}$. As $I \models_b G_2$, we know that there exists $C_2: E_2 → T_G$ such that $I[B\bullet C_2](F_2) = \text{true}$.
We set
\[
C_1(v):=\begin{cases}
	C_2(M(v)) \text{ if } M(v) \in E_2\\
	M(v) \text{ if } M(v) \in T_G
\end{cases}
\]
We have
\TabPositions{0.5cm,4.5cm}
\begin{itemize}
\item[] 
   \tab $I[B\bullet C_2](F_2) = \text{true}$
\item[] = \tab $I[B\bullet C_2](M^t(F_1))$  \tab as $F_2 = M^t(F_1)$ per Definition~\ref{def:instance}
\item[] = \tab $I[B\bullet C_1](F_1)$       \tab by construction of $C_1$
 \end{itemize}
\end{proof}

For U-instances, the opposite is true: U-generalisations entail their instances.

\begin{lemma}[U-Instances]
\label{lemma:u-instances}
A closed N3 graph entailes any of its U-instances.
\end{lemma}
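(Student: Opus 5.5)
We follow the pattern of the proof of Lemma~\ref{lemma:e-instances}, with the roles of the two graphs exchanged. Let $G_1=(U_1,E,F_1)$ be a closed N3 graph and let $G_2$ be a U-instance of $G_1$. By Definition~\ref{def:instance} there is a mapping $M:X\subseteq U_1\rightarrow T_G\cup U_1$ such that $G_2=(U_2,E_2,F_2)$ is the normalisation of $(U_1,E,M^t(F_1))$. In particular $F_2=M^t(F_1)$, $U_2\subseteq U_1$ and $E_2\subseteq E$. We may assume that $M$ is the identity on $U_1\setminus X$, so that $M$ maps each variable of $U_1$ either to a ground term or to a variable of $U_1$. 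Since $G_1$ is closed and $M$ only introduces ground terms or variables of $U_1$, the graph $G_2$ is closed too. By the lemma on closures, it therefore suffices to consider assignments to quantified variables.

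Let $I\models_b G_1$. We must show that for every $B_2:U_2\rightarrow T_G$ there is $C_2:E_2\rightarrow T_G$ with $I[B_2\bullet C_2](F_2)=true$. Extend $B_2$ arbitrarily to $U_1$ (its values outside $U_2$ are irrelevant, since those variables do not occur free in $F_2$). Then define $B_1:U_1\rightarrow T_G$ by
\[
B_1(u)=\begin{cases}M(u)&\text{if } M(u)\in T_G,\\ B_2(M(u))&\text{if } M(u)\in U_1.\end{cases}
\]
Since $I\models_b G_1$, there is $C_1:E\rightarrow T_G$ with $I[B_1\bullet C_1](F_1)=true$. Set $C_2:=C_1|_{E_2}$. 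What remains is the identity
\[
I[B_1\bullet C_1](F_1)=I[B_2\bullet C_1](M^t(F_1)),
\]
followed by restricting $C_1$ to $E_2$, which is harmless because the variables of $E\setminus E_2$ are not free in $F_2$.

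This identity is the main obstacle. I would prove it as a substitution lemma by structural induction on terms, triples and sets of triples. The statement is that for any assignment $A$ and any mapping $M$ from variables to $T_G\cup V$, we have $I[A\circ' M](x)=I[A](M^t(x))$, where $(A\circ' M)(v)=M(v)$ if $M(v)$ is ground and $A(M(v))$ otherwise.
\begin{itemize}
\item For variables and constants the identity is immediate.
\item For triples and sets of triples it follows componentwise.
\item For graph terms we need $Q_I((A\circ' M)^t(\GTW{H}))=Q_I(A^t(M^t(\GTW{H})))$. Total application only touches the free variables of $H$ (Definition~\ref{def:total-application}), so the two sides should be syntactically identical graph terms.
\end{itemize}
The delicate point is variable capture: a variable $u\in U_1$ could be mapped to some $u'\in U_1$ that is rebound inside a nested graph term. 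In that case $M^t$ would place $u'$ under an inner quantifier and change its meaning. I would handle this by first replacing $G_1$ with an isomorphic graph in which nested quantified variables are renamed apart from $U_1$. Lemma~\ref{lemma:isographs} guarantees that this preserves models, and $Q_I$ is invariant under isomorphism, which justifies the renaming. With no capture possible, the two total applications coincide, and the proof is complete.
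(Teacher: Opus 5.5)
Your main line is the paper's own proof. The paper also sets $B_1(v)=M(v)$ for ground images and $B_1(v)=B_2(M(v))$ otherwise, takes the existential witness supplied by $I\models_b G_1$, and concludes via $I[B_1\bullet C](F_1)=I[B_2\bullet C](M^t(F_1))$ ``by construction of $B_1$''. You are more careful than the paper in three places: you extend $B_2$ to $U_1$ (the definition lets $M$ map into $U_1$, not only into $U_2$), you restrict $C_1$ to $E_2$ (normalisation can shrink $E$), and you isolate the substitution identity that the paper uses without proof.

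The step that fails is your treatment of variable capture. The U-instance $G_2=(U_1,E,M^t(F_1))^N$ is fixed by the syntax of $F_1$. If you first rename $G_1$ into an isomorphic $G_1'=(U_1,E,F_1')$, your argument proves $G_1\models_b (U_1,E,M^t(F_1'))^N$. When capture occurs in $F_1$, this graph is in general not isomorphic to $G_2$. Invariance of $Q_I$ under isomorphism does not help, because $M^t(\GTW{H})$ and $M^t(\GTW{H'})$ need not be isomorphic even if $H\simeq H'$.

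In fact, total application (Definition~\ref{def:total-application}) is not capture-avoiding, so the statement is false in this case. Take the closed graph $G_1=\G{x,y}{}{(\text{:a},\text{:p},\GT{}{y}{(x,\text{:q},y)})}$ and $M=\{x\mapsto y\}$. The resulting U-instance is $G_2=\G{}{}{(\text{:a},\text{:p},\GT{}{y}{(y,\text{:q},y)})}$. Now take the interpretation with $\Delta_I=T_G$, $D_I$ the identity, $Q_I$ picking a fixed representative of each isomorphism class, and $EXT_I=\{(\text{:a},\text{:p},Q_I(\GT{}{y}{(t,\text{:q},y)}))\mid t\in T_G\}$. It is a basic model of $G_1$ but not of $G_2$, since no $\G{}{y}{(t,\text{:q},y)}$ with $t$ ground is isomorphic to $\G{}{y}{(y,\text{:q},y)}$. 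Your renaming would instead produce the U-instance $\G{y}{}{(\text{:a},\text{:p},\GT{}{z}{(y,\text{:q},z)})}$. $G_1$ does entail that graph, but it is not $G_2$.

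So capture cannot be argued away. There are two possible repairs:
\begin{itemize}
\item add a hypothesis that excludes capture, e.g.\ that no variable in $M(X)\cap V$ is quantified inside a nested graph term of $F_1$;
\item redefine total application to rename bound variables. Then $G_2$ is only determined up to isomorphism, and your renaming step becomes legitimate.
\end{itemize}
Under either repair your proof is complete. The paper's proof silently relies on the no-capture assumption in its ``by construction of $B_1$'' step.
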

\begin{proof}
Let $G_1 = (U_1, E, F_1)$ be a closed N3 graph, and $G_2 = (U_2, E, F_2)$ a U-instance of $G_1$ with the instantiation mapping $M: U_1 → T_G \cup U_2$.
Let $I \models_b G_1$. We show that for all $B_2: U_2 → T_G$, there exists $C: E → T_G$ such that $I[B_2\bullet C](F_2) = \text{true}$. As $I \models_b G_1$, we know that for all $B_1: U_1 → T_G$ there exists $C: E → T_G$ such that $I[B_1\bullet C_](F_1) = \text{true}$.
We chose $B_1$ such that
\[
B_1(v) = \begin{cases}
	B_2(M(v)) \text{ if } M(v) \in U_2\\
	M(v) \text{ if } M(v) \in T_G
\end{cases}
\]
We have
\TabPositions{0.5cm,4.5cm}
\begin{itemize}
\item[]   \tab $I[B_1\bullet C](F_1) = \text{true}$
\item[] = \tab $I[B_2\bullet C](M^t(F_1))$  \tab by construction of $B_1$
\item[] = \tab $I[B_2\bullet C_1](F_2)$     \tab as $F_2 = M^t(F_1)$ per Definition~\ref{def:instance}
\end{itemize}
\end{proof}

The merge of two logical consequences is furthermore a logical consequence.

\begin{lemma}[Merge]\label{lemma:merge}
Let $G,G_1,G_2$ be closed N3 graphs, let $G_{m}$ be a merge of $G_1$ and $G_2$, and let $G\models_b G_1$ and $G\models_b G_2$ then
$G\models_b G_{m}$.
\end{lemma}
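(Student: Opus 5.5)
We prove the statement semantically, by building a witness assignment for the merge out of the witness assignments for its two parts. By definition of merge, $G_m = G'_1 \sqcup G'_2$ with $G'_i \simeq G_i$. Write $G'_i=(U'_i,E'_i,F'_i)$. The side conditions of the disjoint union give $U'_1\cap U'_2 = E'_1\cap E'_2=\emptyset$. The graphs $G'_i$ are closed: $G_i$ is closed, and a bijection $M$ witnessing the isomorphism maps $FV(G_i)$ onto $FV(G'_i)$, since under normalisation $M$ maps $var(G_i)$ onto $var(G'_i)$ and the quantified parts onto each other. Hence, by Lemma~\ref{lemma:isographs}, $G'_i$ has the same models as $G_i$, so $G\models_b G'_1$ and $G\models_b G'_2$. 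We may also normalise $G'_1,G'_2$ beforehand, or observe that the models of a graph are unaffected by vacuous quantifiers.

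Fix a model $I\models_b G$. We must show $I\models_b G_m$, i.e.\ for every $B:U'_1\cup U'_2\to T_G$ there is $C:E'_1\cup E'_2\to T_G$ with $I[B\bullet C](F'_1\cup F'_2)=\text{true}$. Set $B_i := B|_{U'_i}$. Since $I\models_b G'_i$, there is $C_i:E'_i\to T_G$ with $I[B_i\bullet C_i](F'_i)=\text{true}$. Define $C := C_1\cup C_2$, which is a well-defined function because $E'_1\cap E'_2=\emptyset$.

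The key step is to show $I[B\bullet C](F'_i) = I[B_i\bullet C_i](F'_i)$. This is a locality property: the value of $I[A](F)$ depends only on $A|_{FV(F)}$. We prove it by a routine induction over terms, triples and sets of triples, in the same way as the observation used in the proof of Lemma~\ref{lemma:subgraphs}. For graph terms the induction uses that $A^t$ only touches free variables, by Definition~\ref{def:total-application}. Since $G'_i$ is closed, $FV(F'_i)\subseteq U'_i\cup E'_i$.

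The main obstacle is ensuring that $B\bullet C$ agrees with $B_i\bullet C_i$ on $FV(F'_i)$. This can fail only if a variable of $G'_1$ is universally quantified in $G'_1$ but existentially quantified in $G'_2$, or vice versa, since the definition of $\sqcup$ requires only $U'_1\cap U'_2=\emptyset$ and $E'_1\cap E'_2=\emptyset$. We resolve this by choosing the isomorphic copies with fully disjoint variable sets, which the freedom in choosing $G'_i$ permits. If the merge is given with clashing names, we first rename: Lemma~\ref{lemma:isographs} shows that the models do not change, and $U\cap E=\emptyset$ then holds in $G_m$, as the definition of a graph requires. Under disjointness the two assignments agree on $FV(F'_i)$, so $I[B\bullet C](F'_1\cup F'_2)=\text{true}$ because both parts are true. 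Hence $I\models_b G_m$, and since $I$ was an arbitrary model of $G$, $G\models_b G_m$.
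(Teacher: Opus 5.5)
Your proof is correct and follows the paper's route. You transfer models to the isomorphic copies $G'_1,G'_2$ via Lemma~\ref{lemma:isographs} and combine the existential witnesses using disjointness of the quantifier sets, spelling out the locality argument and the closedness of the $G'_i$ that the paper leaves implicit.

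The cross-clash you flag needs no renaming, and since $G_m$ is fixed you cannot freely re-choose the $G'_i$ anyway. Because $G_m$ is itself a graph, $(U'_1\cup U'_2)\cap(E'_1\cup E'_2)=\emptyset$; together with $U'_1\cap U'_2=E'_1\cap E'_2=\emptyset$ and closedness, this already gives $FV(F'_i)\cap(U'_j\cup E'_j)=\emptyset$ for $i\neq j$.
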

\begin{proof}
Let $G,G_1,G_2, G_m$ be as stated above, and let $I\models_b G$. 
Then $I\models_b G_1$ and $I\models_b G_2$. As $G_m$ is a merge of $G_1$ and $G_2$, there exist graphs $G_1'=(U_1', E_1', F_1')\simeq G_1$ and $G_2'=(U_2',E_2',F_2')\simeq G_2$ such  that $G_m=G_1'\sqcup G_2'$~(Definition~\ref{def:merge}). By Lemma~\ref{lemma:isographs} it holds that $I\models_b G_1'$ and $I\models_b G_2'$ and as $U_1'\cap U_2'=E_1'\cap E_2'=\emptyset$, this implies that  $I\models_b G_1'\sqcup G_2'$ and thus $I\models_b G_m$.
\end{proof}

\subsection{Interpolation Lemma}\label{sec:interpolation-lemma}
In the previous section we showed how several syntactic relations between graphs ensure entailment but we did not yet characterise the other direction: the syntactical relation between graphs which entail each other.
The goal of this section is to find an N3 counterpart for RDF's interpolation lemma. 

To do so, we first remind that, according to Lemma~\ref{lemma:simpleEntailment} that for pure RDF graphs b-entailmant coincides with RDF simple entailment.
In N3, we furthermore need to take universal quantification into account.
The graph 
\begin{equation}\label{eq:uniGT}
G_1=\G{x}{}{(x,\textit{:p},\textit{:o})}\end{equation}
b-entails
\begin{equation}G_2=\G{}{}{(\textit{:s},\textit{:p},\textit{:o})}\end{equation}
We replaced the universal variable $x$ by the ground term $\textit{:s}$ and retrieved a U-instance~(Lemma~\ref{lemma:u-instances}). This form of syntactical derivation gets more complicated if there are existentially quantified variables depending on the universals like for example in the following case:
\begin{equation}\label{eq:unGT1}
G_3=\G{x}{y}{(x,\textit{:hasMother},y)}
\end{equation}
Here, we state that everyone~(and everything) has a mother. We can safely b-entail that
\begin{equation}\label{eq:unGT2}
G_4=\G{}{y}{(\textit{:bob},\textit{:hasMother},y)}
\end{equation}
but the entailment
\begin{equation}\label{eq:unGT3}
G_4=\G{}{y}{(\textit{:bob},\textit{:hasMother},y), (\textit{:liz},\textit{:hasMother},y)}
\end{equation}
is incorrect and a correct version of this kind of entailment would be
\begin{equation}\label{eq:unGT4}
G_4=\G{}{y,z}{(\textit{:bob},\textit{:hasMother},y), (\textit{:liz},\textit{:hasMother},z)}
\end{equation}
to make sure that \textit{:bob} and \textit{:liz} possibly have different mothers. As we quantify over all ground terms and thus over an infinite set -- even if the set of constants is limited, we can construct infinitely many graph terms -- we cannot easily identify a finite graph from which all possible consequences can be derived, as it is the case for simple entailment in RDF where we have RDF's interpolation lemma.
Instead, we construct the \emph{universal extension} of a graph which might be infinite.
%
%
The idea is motivated by the concept of universal models for existential rules~\cite{universalModel}.

\begin{definition}[Universal Extension]
\label{def:universal-extension}
	Let $G=(U,E,F)$  be a finite and normalised closed N3 graph, $|U|=n$.  Let $N\subset V$ be a set of variables such that  $N\cap  C(G) =\emptyset$.
	We enumerate the variables in $U=\{u_1, u_2,\ldots u_n\}$ and define for each combination of ground terms 	$(t_1, \ldots, t_n)=\overline{t}\in T_G^n$  two mappings 
	$A_{\overline{t}}:U\rightarrow T_G$ 
	and $S_{\overline{t}}: E\rightarrow N$, such that:
	\begin{enumerate}
	\item[(i)] $A_{\overline{t}}(u_i)=t_i$ for all $1\leq i\leq n$,
		\item[(ii)] $S_{\overline{t}}(y_1)\neq S_{\overline{t'}}(y_2)$ if $\overline{t}\neq \overline{t'}$ or $y_1\neq y_2$
	\end{enumerate}
If $N$ is minimal in the sense that for each $x\in N$ there exists a value $\overline{t}\in T_G^{n}$ and a value $y \in E$ such that $S_{\overline{t}}(y)=x$, 
we call the graph $U_N(G)=(\emptyset,N,F')$ with 
	$F':=\bigcup_{\overline{t}\in T_G^n} (A_{\overline{t}}\bullet S_{\overline{t}})^t(F)$ a \emph{universal extension} of $G$ via $N$. In cases where the exact choice of $N$ is not important, we omit $N$ and call $U(G)$ a universal extension of $G$.
\end{definition}

Note that, modulo an appropriate mapping $B: N → T_G$, the universal extension contains all triples which must be true in any model of $G$. This is captured by the following lemma:

\begin{lemma}[Universal Extension equivalence]\label{lemma:uex}
Let $G = (U, E, F)$ be a finite normalized closed N3 graph,
and $U_N(G) = (\emptyset,N,F')$ a universal extension of $G$. Let $I$ be an 
interpretation, then the following holds:
\[I \models_b G \quad \text{ iff } \quad I\models_b U_N(G) \]
\end{lemma}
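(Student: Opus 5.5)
The plan is to unfold both sides of the equivalence using the definition of interpretation with assignment. $G$ is closed and $U_N(G)$ has no universals, and its free variables lie in $N$, which are all existentially quantified. So $I\models_b G$ means: for every $B:U\rightarrow T_G$ there is $C_B:E\rightarrow T_G$ with $I[B\bullet C_B](F)=true$. Likewise $I\models_b U_N(G)$ means: there is $C':N\rightarrow T_G$ with $I[C'](F')=true$, i.e.\ $I[C']((A_{\overline{t}}\bullet S_{\overline{t}})^t(F))=true$ for every $\overline{t}\in T_G^n$. Every $B:U\rightarrow T_G$ is of the form $A_{\overline{t}}$ for exactly one $\overline{t}$, namely $t_i=B(u_i)$. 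So the equivalence reduces to a bookkeeping lemma relating applications of a mapping to a substitution.

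The key auxiliary step, proved first by structural induction on terms, triples and sets of triples, is a substitution lemma. Let $\sigma$ map variables to ground terms or to variables, and suppose no variable in the range of $\sigma$ is captured by a quantifier of a nested graph term. Then
\[ I[D](\sigma^t(x)) = I[D\circ\sigma \,\bullet\, \ldots](x), \]
where $D$ is extended to send ground terms to themselves. Concretely, $I[C'](\sigma^t(x))=I[A'](x)$ with $A'(v)=\sigma(v)$ if $\sigma(v)\in T_G$ and $A'(v)=C'(\sigma(v))$ if $\sigma(v)\in N$.

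The cases of the induction are as follows.
\begin{itemize}
\item Variables and constants are immediate from the definitions.
\item Triples and sets of triples follow directly from the induction hypothesis.
\item For graph terms, both sides reduce to $Q_I$ applied to the total application of the composite substitution, so we need $C'^t(\sigma^t(\GTW{H}))=A'^t(\GTW{H})$. Total application only touches free variables. Capture is excluded because $N\cap C(G)=\emptyset$, and the substituted values are either ground terms or fresh variables of $N$.
\end{itemize}
This same lemma underlies the proofs of Lemmas~\ref{lemma:e-instances} and~\ref{lemma:u-instances}, and I would follow their style.

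For ($\Rightarrow$), assume $I\models_b G$. For each $\overline{t}$ pick $C_{\overline{t}}:=C_{A_{\overline{t}}}$ and define $C'(S_{\overline{t}}(y)):=C_{\overline{t}}(y)$. This is well defined because of injectivity condition (ii) of Definition~\ref{def:universal-extension}, and total on $N$ because of the minimality of $N$. By the substitution lemma with $\sigma=A_{\overline{t}}\bullet S_{\overline{t}}$, the induced $A'$ equals $A_{\overline{t}}\bullet C_{\overline{t}}$ on $FV(F)=U\cup E$; this uses that $G$ is normalised and closed. Hence each component of $F'$ is true under $C'$, so $I[C'](F')=true$. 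For ($\Leftarrow$), given $C'$ and $B=A_{\overline{t}}$, set $C_B:=C'\circ S_{\overline{t}}$. The substitution lemma then gives $I[B\bullet C_B](F)=I[C']((A_{\overline{t}}\bullet S_{\overline{t}})^t(F))=true$, since that set is a subset of $F'$.

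The main obstacle is the graph-term case of the substitution lemma. We must ensure that substituting under $Q_I$ commutes correctly: variables scoped inside nested graph terms must be neither replaced nor captured, and combining $A_{\overline{t}}$ with the restrictions $M|_{FV(x)}$ in the total application must agree with $A'^t$. I would handle this by noting that total application on a graph restricts to $FV$, so bound variables are untouched. Fresh choice of $N$ and groundness of the $t_i$ prevent capture, so $(C'^t\circ\sigma^t)$ and $A'^t$ coincide on every subterm.
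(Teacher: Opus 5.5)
Your proof is the paper's proof. Your witness $C'(S_{\overline{t}}(y)):=C_{\overline{t}}(y)$ is the paper's $B_{\overline{t}}\circ S_{\overline{t}}^{-1}$, with the dependence of $\overline{t}$ on the argument, well-definedness via condition (ii) and totality via minimality of $N$ spelled out. Your $C'\circ S_{\overline{t}}$ is the paper's $B\circ S_{\overline{t}}$, and your substitution lemma is exactly the step the paper leaves implicit.

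One claim, however, is false as written, and the paper's proof silently depends on it too: capture is \emph{not} excluded by $N\cap C(G)=\emptyset$. Constituents are only the terms occurring in triples, and normalisation of $G$ only touches its top level. So a variable occurring solely in the quantifier set of a nested graph term is not in $C(G)$ and may be chosen in $N$.

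For instance, take $H=(\emptyset,\{x\},\{(y,\text{:q},\text{:o})\})$ and $G=(\emptyset,\{y\},\{(y,\text{:p},\GTW{H})\})$. The graph $G$ is finite, closed and normalised, and $x\notin C(G)$. So $N=\{x\}$ with $S(y)=x$ is admissible, and it yields $F'=\{(x,\text{:p},\GTW{(\emptyset,\{x\},\{(x,\text{:q},\text{:o})\})})\}$, in which $x$ has been captured. Now consider an interpretation whose extension is exactly $\{(D_I(\text{:a}),D_I(\text{:p}),Q_I(\GTW{(\emptyset,\{x\},\{(\text{:a},\text{:q},\text{:o})\})}))\}$. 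It is a basic model of $G$ (take $y\mapsto\text{:a}$) but not of $U_N(G)$. The two quoted graphs are not isomorphic (their normalisations have different existential sets), so $Q_I$ separates them.

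So the lemma itself needs a stronger freshness condition, e.g. that $N$ avoid every variable occurring anywhere in $G$, including the quantifier sets of nested graph terms. Under that condition your graph-term case of the substitution lemma is correct, and the rest of your argument goes through unchanged.
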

\begin{proof}

$\Rightarrow$
Let $I$ be a model of $G$.
We know that for all mappings $A_{\overline{t}}: U → T_G$, there exists a mapping $B_{\overline{t}}: E → T_G$ such that $I[A_{\overline{t}} \bullet B_{\overline{t}}](F)$ is true.
We also know~(from Definition~\ref{def:universal-extension}) that there exists a mapping $S_{\overline{t}}$ such that $(A_{\overline{t}} \bullet S_{\overline{t}})^t(F) \subseteq F'$. 
We also know that, for every $x \in N$, there exists $\overline{t} \in T_G^n$ and $y \in E$ such that $S_{\overline{t}}(y) = x$.
We set $B := B_{\overline{t}} \circ S_{\overline{t}}^{-1}$.
It follows that $I[B](F')$ is true, thus $I \models_b U_N(G)$.
 
$\Leftarrow$
Let $I$ be a model of $U_N(G)$.
We know that there exists $B: N → T_G$ such that $I[B](F')$ is true.
We also know~(from Definition~\ref{def:universal-extension}) that for all mapping $A_{\overline{t}}: U → T_G$ there exists a mapping $S_{\overline{t}}$ such that $(A_{\overline{t}} \bullet S_{\overline{t}})^t(F) \subseteq F'$.
We set $B_{\overline{t}} := B \circ S_{\overline{t}}$.
Then $I[A_{\overline{t}} \bullet B_{\overline{t}}](F)$ is true, thus $I \models_b G$.
\end{proof}

With universal extensions, we can provide the N3 version of the interpolation lemma for basic entailment:

\begin{lemma}[Generalised interpolation lemma]\label{lemma:interpolation}
		Given two finite normalised closed graphs $G_1=(U_1,E_1,F_1)$ and $G_2=(U_2,E_2,F_2)$ and universal extensions  $U_{N_1}(G_1)=(\emptyset, N_1, F'_1)$ of $G_1$  and $U_{N_2}(G_2)=(\emptyset, N_2, F'_2)$ of $G_2$  
        then the following holds:		
	\[G_1 \models_b G_2 \quad\text{ iff }\quad \parbox[t]{0.8\textwidth}{there exists a mapping $M: N_2 → N_1$ such that, for every triple $(s_2, p_2, o_2)$ of an E-instance of $U_{N_2}(G_2)$ there is some triple $(s_1, p_1, o_1)$ of $U_{N_1}(G_1)$ such that $s_2$, $p_2$, and $o_2$, are isomorphic under $M$ to $s_1$, $p_1$, and $o_1$ respectively.} \].
\end{lemma}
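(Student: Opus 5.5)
By Lemma~\ref{lemma:uex}, $G_1$ and $U_{N_1}(G_1)$ have the same models, and likewise $G_2$ and $U_{N_2}(G_2)$. So $G_1\models_b G_2$ holds iff $U_{N_1}(G_1)\models_b U_{N_2}(G_2)$. This reduces the statement to purely existential graphs of the form $(\emptyset,N,F')$, possibly infinite. For these I would mimic the proof of RDF's interpolation lemma, in the spirit of Lemma~\ref{lemma:simpleEntailment}.

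\textbf{Direction ``$\Leftarrow$''.} Assume the mapping $M:N_2\rightarrow N_1$ exists as described. Let $I\models_b U_{N_1}(G_1)$, witnessed by some $B_1:N_1\rightarrow T_G$.
\begin{enumerate}
\item Define $B_2:=B_1\circ M$.
\item Each triple of (an E-instance of) $U_{N_2}(G_2)$ is isomorphic under $M$ to a triple of $F'_1$, and that triple is true under $I[B_1]$.
\item Transfer truth through the isomorphism term by term. For atomic terms this is immediate. For a graph term $\GTW{H_2}$ isomorphic under $M$ to $\GTW{H_1}$, the terms $B_2^t(\GTW{H_2})$ and $B_1^t(\GTW{H_1})$ are ground and isomorphic graph terms, so $Q_I$ gives them the same image.
\end{enumerate}
This makes $I[B_2](F'_2)$ true. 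Step~3 needs a small auxiliary lemma: if $x_2$ is isomorphic to $x_1$ under $M$, then $(B\circ M)^t(x_2)$ is isomorphic to $B^t(x_1)$. I would prove it by induction on nesting depth, reusing the bookkeeping with $M|_{FV}\bullet P$ from the proof of Lemma~\ref{lemma:eq}. Finally, Lemma~\ref{lemma:e-instances} passes from the E-instance back to $U_{N_2}(G_2)$.

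\textbf{Direction ``$\Rightarrow$''.} Assume $G_1\models_b G_2$. I would build a canonical (Herbrand-like) model $I$ of $U_{N_1}(G_1)$:
\begin{enumerate}
\item Let $\Delta_I$ be the set of ground terms together with fresh elements, one per $n\in N_1$.
\item Take $D_I$ to be the identity on IRIs and literals, and $Q_I$ to pick a canonical isomorphic representative.
\item Take $EXT_I$ to consist exactly of the images of the triples of $F'_1$ under the ``freezing'' assignment.
\end{enumerate}
The difficulty is that assignments map into $T_G$, not into fresh domain elements. As in the proof of Lemma~\ref{lemma:simpleEntailment}, I would therefore name each $n\in N_1$ by a distinct fresh ground graph term $\delta(n)$, for example $\GTW{(c_n,c_n,c_n)}$ with a constant $c_n$ not occurring in $G_1$ or $G_2$. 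This gives a freezing assignment $B_1=\delta$ under which $I\models_b U_{N_1}(G_1)$.

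By the hypothesis, $I\models_b U_{N_2}(G_2)$, so there is some $B_2:N_2\rightarrow T_G$ making $F'_2$ true. Each true triple in $I$ comes from a triple of $F'_1$ after freezing. Now define $M(v):=\delta^{-1}(B_2(v))$ wherever $B_2(v)$ is one of the frozen names. The remaining variables, those sent by $B_2$ to other ground terms, are absorbed by passing to the E-instance of $U_{N_2}(G_2)$ obtained by substituting $B_2$ on them. That E-instance is exactly what the statement quantifies over.

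\textbf{Expected main obstacle.} The hardest step is ``$\Rightarrow$'' in the presence of graph terms. Inside a graph term, a variable is read as the ground term $B_2(v)$ itself, and not through $D_I$. A match of $B_2^t(\GTW{H_2})$ with $\delta^t(\GTW{H_1})$ is therefore only an isomorphism after unfreezing, and fresh constants could in principle collide with genuine ones. I would handle this by choosing the $c_n$ outside all constants of $G_1,G_2$, so that $\delta$ is injective and invertible on the relevant terms. I would then check that isomorphism is preserved by replacing the frozen terms with their variables. Two further points need care: that $M$ is well defined, which follows from the injectivity of $\delta$, and that the countability caveat from Lemma~\ref{lemma:simpleEntailment} allows enough fresh constants.
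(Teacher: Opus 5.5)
\textbf{Gap in ``$\Rightarrow$''.} Your plan follows the paper's proof closely. You reduce to universal extensions via Lemma~\ref{lemma:uex}. You get ``$\Leftarrow$'' by pulling $B_1$ back along $M$ and applying Lemma~\ref{lemma:e-instances}; your direct transfer lemma replaces the paper's appeal to Lemmas~\ref{lemma:subgraphs} and~\ref{lemma:isographs}, and that direction is fine. You get ``$\Rightarrow$'' from a canonical model with frozen names. The gap is in ``$\Rightarrow$'', and more careful bookkeeping cannot close it, because the right-hand side can fail even when $G_1\models_b G_2$. Choosing $c_n$ outside $C(G_1)\cup C(G_2)$ does not make the frozen names fresh for $F'_1$: the universals of $G_1$ are instantiated with \emph{every} ground term, $\delta(n)$ included. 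The same holds for the fresh IRIs the paper uses. Take $G_1=\G{x}{y}{(x,\text{:p},y)}$ and $G_2=\G{}{u,v,w}{(u,\text{:p},v),(v,\text{:p},w)}$. In any basic model of $G_1$ every ground $t$ has a ground witness $g_t$, so $u\mapsto\text{:a}$, $v\mapsto g_{\text{:a}}$, $w\mapsto g_{g_{\text{:a}}}$ shows $G_1\models_b G_2$. But $F'_1=\{(t,\text{:p},\nu_t)\mid t\in T_G\}$, with $\nu_t:=S_{(t)}(y)\in N_1$, has only ground subjects and only variable objects. Meanwhile $U_{N_2}(G_2)$ consists of $(u',\text{:p},v')$ and $(v',\text{:p},w')$. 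If an E-instance leaves a variable in the position of $v'$, the second triple matches nothing in $F'_1$. If it puts a ground term there, the first triple matches nothing. Concretely, in your model $B_2(v')=\delta(\nu_t)$ for some $t$, and you set $M(v')=\nu_t$. Yet every triple of $F'_1$ with the same image as $(v',\text{:p},w')$ has the \emph{ground} subject $\delta(\nu_t)$, which is a universal instance, not $\nu_t$.

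\textbf{A second, independent failure.} Your ``unfreezing'' inside graph terms fails even without universals. Take $G_1=\G{}{n}{(\text{:a},\text{:p},\GT{}{}{(n,\text{:q},\text{:r})})}$ and $G_2=\G{}{v}{(\text{:a},\text{:p},v)}$. Sending $v$ to $\GT{}{}{(g,\text{:q},\text{:r})}$, with $g$ the witness for $n$, shows $G_1\models_b G_2$. In your model $B_2(v')$ is a ground graph term containing $\delta(n')$, and unfreezing it gives $\GT{}{}{(n',\text{:q},\text{:r})}$. That term is neither ground, so E-instantiation cannot produce it, nor a variable, so $M:N_2\to N_1$ cannot produce it. Moreover, no ground graph term is isomorphic to it under any $M$, since isomorphism of the triple sets needs a bijection onto the free variable $n'$. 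So ``$\Rightarrow$'' is false as stated. A correct version needs something like letting $M$ send variables of $N_2$ to arbitrary terms whose free variables lie in $N_1$, together with a chase-style extension in which universals are also instantiated by terms built from the witnesses. The paper's own proof breaks at exactly the same step: the triple of $F'_1$ found by comparing images need not be isomorphic to the $F'_2$ triple under the extracted $M$. So only ``$\Leftarrow$'' is actually established, by your argument as by the paper's.
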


Before proving the lemma, we demonstrate how it relates to the RDF interpolation lemma for simple entailment~\cite[Section 5.3]{RDFsemantics}
with the following corollary:


\begin{corollary}[RDF interpolation lemma]\label{RDFint}
Given two finite normalised closed graphs $G_1=(\emptyset,E_1,F_1)$ and $G_2=(\emptyset,E_2,F_2)$ which do not contain graph terms.
$G_1$ b-entails a graph $G_2$ if and only if a subgraph of $G_1$ is an E-instance of $G_2$.
\end{corollary}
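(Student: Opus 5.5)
The plan is to obtain the corollary as a specialisation of Lemma~\ref{lemma:interpolation}, but a direct argument via Lemma~\ref{lemma:simpleEntailment} is also available; I would combine both to keep the proof short. The key simplification is that $U_1=U_2=\emptyset$, so $T_G^0$ contains only the empty tuple. Hence a universal extension $U_{N_i}(G_i)=(\emptyset,N_i,F'_i)$ is simply $G_i$ with its existential variables renamed injectively into $N_i$ by $S_{()}$. In particular $U_{N_i}(G_i)$ is isomorphic to $G_i$, and because $G_i$ is normalised every variable of $N_i$ actually occurs in $F'_i$.

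For the direction ``$\Leftarrow$'', suppose a subgraph $G'$ of $G_1$ is an E-instance of $G_2$. Lemma~\ref{lemma:subgraphs} gives $G_1\models_b G'$. Lemma~\ref{lemma:e-instances} gives $G'\models_b G_2$. Transitivity of $\models_b$, which is immediate from Definition~\ref{def:basicEntailment}, then yields $G_1\models_b G_2$. One point needs checking: Lemma~\ref{lemma:e-instances} assumes the instance and the generalisation share the same $U$, which holds trivially here since both are $\emptyset$. I must also make sure that $G'$ is closed, so that Lemma~\ref{lemma:subgraphs} is applicable in the form stated.

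For the direction ``$\Rightarrow$'', assume $G_1\models_b G_2$ and apply Lemma~\ref{lemma:interpolation} to obtain $M:N_2\to N_1$. The lemma's condition quantifies over E-instances of $U_{N_2}(G_2)$, and $U_{N_2}(G_2)$ is itself an E-instance of itself (take the empty instantiation mapping). So for every triple of $F'_2$ there is a triple of $F'_1$ to which it is isomorphic under $M$.

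Since there are no graph terms, ``isomorphic under $M$'' on terms reduces to this: IRIs and literals map to themselves, and a variable $v$ maps to $M(v)$. So the image $M^t(F'_2)$ is a subset $F''\subseteq F'_1$. Now transport back along the renamings $S_{()}$: set $G':=(\emptyset,E',F'')$ pulled back into $G_1$, with $E'=E_1\cap FV(F'')$. This $G'$ is a subgraph of $G_1$, and it is the normalisation of $(\emptyset,E_2,\mu^t(F_2))$ for $\mu=S_{1}^{-1}\circ M\circ S_2$, where $\mu: E_2\to E_1$.

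The main obstacle is a mismatch with Definition~\ref{def:instance}, which requires $\mu$ to map into $T_G\cup E_2$, i.e.\ the instance keeps the generalisation's variable names. I expect to handle this by invoking Lemma~\ref{lemma:isographs}: rename $E_1$ apart so that the targets coincide up to isomorphism, and read ``is an E-instance'' modulo isomorphism, as in RDF's own interpolation lemma. If this renaming proves awkward, the fallback is to prove ``$\Rightarrow$'' directly. Using Lemma~\ref{lemma:simpleEntailment}, the hypothesis becomes RDF simple entailment. The classical RDF interpolation lemma then supplies a subgraph of $G_1$ that is an RDF instance of $G_2$, and that is exactly an E-instance in our sense.
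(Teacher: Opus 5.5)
Your overall route, specialising Lemma~\ref{lemma:interpolation}, is the paper's. Your ``$\Leftarrow$'' direction is fine and is exactly the argument of Corollary~\ref{cor:PC}. You are also right that the statement only holds when read as ``isomorphic to an E-instance'', as in Definition~\ref{def:PC}.

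The gap is in ``$\Rightarrow$''. You read the right-hand side of Lemma~\ref{lemma:interpolation} as quantifying over \emph{all} E-instances of $U_{N_2}(G_2)$ and then pick the trivial one. The condition is existential. The lemma's proof constructs one particular E-instance, via $B_2'$, in which every variable $x$ with $B_2(x)\in C(G_1)$ is replaced by that ground term. Under the universal reading the lemma would be false, since one could instantiate a variable of $G_2$ by an IRI not occurring in $G_1$.

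Your intermediate claim, that a variable-to-variable map $M$ sends $F_2'$ into $F_1'$, is in fact false. Take $G_1=(\emptyset,\{x\},\{(\text{:a},\text{:p},x)\})$ and $G_2=(\emptyset,\{y,z\},\{(y,\text{:p},z)\})$. Then $G_1\models_b G_2$, but no variable is isomorphic under any $M$ to the IRI $\text{:a}$. Consequently your $\mu=S_1^{-1}\circ M\circ S_2$ only renames existentials into existentials and never instantiates them by IRIs or literals. That instantiation is precisely the content of the interpolation lemma.

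The repair is to keep the instantiation the lemma gives you. It provides $W\colon X\subseteq N_2\to T_G\cup N_2$ with $M^t(W^t(F_2'))\subseteq F_1'$. Extend $W$ by the identity and set $\mu(e)=W(S_2(e))$ if this is ground, and $\mu(e)=S_1^{-1}(M(W(S_2(e))))$ otherwise. Then $\mu^t(F_2)\subseteq F_1$, so $(\emptyset,E_1\cap FV(\mu^t(F_2)),\mu^t(F_2))$ is a subgraph of $G_1$. Composing the variable-valued part of $\mu$ with a right inverse $\rho$ of $\mu$ into $E_2$ shows that this subgraph is isomorphic to an E-instance of $G_2$. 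This settles the renaming issue you worried about.

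Your fallback through Lemma~\ref{lemma:simpleEntailment} and the classical RDF interpolation lemma would also work, again only up to renaming, since RDF instances may introduce blank nodes outside $E_2$. However, you offered it as a remedy for the renaming issue, not for this defect, which is what actually breaks your primary argument.
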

To see how the corollary is a special case of the lemma we first we take a closer look at the right hand side of the lemma.  Note  that universal extensions $U_{N_1}(G_1)$ and $U_{N_2}(G_2)$ are variants of $G_1$ and $G_2$ in which only the variables in $E_1$ and $E_2$ are renamed. As we exclude graph terms as components, the isomorphism $M$ between the elements $s_2$, $p_2$, $o_2$ and $s_1$, $p_1$, $o_1$ is only an isomorphism in the RDF sense. 

We come back to Lemma~\ref{lemma:interpolation} and provide its proof:
\begin{proof}
$\Rightarrow$ Let $G_1\models_b G_2$. 
We construct a model $I_U$ for $U_{N_1}(G_1)$.
We set $\Delta_{I_U}=T_G$, $D_{I_U}=\mathit{id}$ where $\mathit{id}$ stands for the identity function. We choose a representation of the isomorphic graph terms and  let $Q_{I_U}$ map graph terms to their isomorphic image. For each $n\in N_1$ we choose a unique IRI $B_1(n)\not\in C(G_1)\cup C(G_2)$. We set $EXT_{I_U}:= \{(I_U[B_1](s_1),I_U[B_1](p_1), I_U[B_1](o_1))|(s_1, p_1, o_1)\in F_1'\}$. We clearly see that $I_U\models_b U_{N_1}(G_1)$. By Lemma~\ref{lemma:uex}, it follows that $I_U\models_b G_1$, which by our assumption means that $I_U\models_b G_2$ and thus, again by Lemma~\ref{lemma:uex}, $I_U\models_b U_{N_2}(G_2)$.  That is, there is a mapping $B_2:N_2\rightarrow T_G$ such that $I_U[B_2]\models_b F_2'$, or  written differently $(I_U[B_2](s_2), I_U[B_2](p_2), I_U[B_2](o_2))\in EXT_{I_U}$ for each triple $(s_2,p_2,o_2)\in F_2'$.
We can therefore always find a triple $(s_1, p_1, o_1)\in F_1'$ such that
$(I_U[B_1](s_1), I_U[B_1](p_1), I_U[B_1](o_1))=(I_U[B_2](s_2), I_U[B_2](p_2), I_U[B_2](o_2))$.  We apply the mapping 
\[B_2'(x):= \begin{cases}B_2(x) &\text{ if } B_2(x)\in C(G_1)\\ x &\text{ else.} \end{cases}\]
to create an E-instance 
of $U_{N_2}(G_2)$. The claim follows for the mapping 
 \[M(x):= \begin{cases}B_1^{-1}\circ B_2(x) &\text{ if the value exists, } \\ n_1  &\text{ else.}\end{cases}\]
where $n_1\in N_1$ is an arbitrary value.

$\Leftarrow$ 
Let $M:N_2\rightarrow N_1$ be a mapping such that, for every triple $(s_2,p_2,o_2)$ of an E-instance $G_2'$ of $U_{N_2}(G_2)$ there exists some triple $(s_1,p_1,o_1)$ of $U_{N_1}(G_1)$ such that $s_2$, $p_2$, and $o_2$ are isomorphic under $M$ to $s_1$, $p_1$, and $o_1$, respectively. 
Let $I\models_b U_{N_1}(G_1)$. 
As $G_2'$ is an E-instance of  $U_{N_2}(G_2)$, there exists a mapping $W:X\subseteq N_2\rightarrow T_G\cup N_2$ such that  $G'_2=(\emptyset, N_2, W^t(F_2'))^N$.  
We thus have that $(\emptyset, N_1, (M^t(W^t(F_2')))$ is isomorphic to a subgraph of $U_{N_1}(G_1)$. 
It follows by Lemma~\ref{lemma:subgraphs} and Lemma~\ref{lemma:isographs}, that $I\models_b (\emptyset, N_1, (M^t(W^t(F_2')))$, that is, there exists a mapping $B:N_1\rightarrow T_G$, such that $I[B]((M^t(W^t(F_2'))))=\textit{true}$. 
We extend $W$ and $M$ to total mappings $W'$ and $M'$ on $N_2\cup T_G$ which behave like the identity function in cases where $W(x)$, respectively $M(x)$ are not defined.
Then $I[B \circ M'\circ W'](F_2')=\textit{true}$  holds
and thus $I\models_b U_{N_2}(G_2)$. The claim follows by Lemma~\ref{lemma:uex}.
\end{proof}


Note that Corollary \ref{RDFint} differs from RDF's interpolation lemma as stated in the specification 
in one detail:
RDF allows the graphs to be infinite while we only talk about finite graphs.
The reason is 
that is that our proof relies on finding "fresh" ground terms the existential variables can map to (variables need to correspond to ground terms). This is not possible for a graph $G$ using all possible ground terms, i.e. $C(G)=T_G$. For such a graph these terms always carry additional meaning.
We aim to address this issue in future work.

\subsection{N3-entailment}\label{n3-entailment}
In the previous sections we introduced and discussed N3's basic-entailment which -- as we explained -- is very similar to RDF's simple entailment. 
On top of this, we define N3-entailment which covers the N3-predicates collected under the \texttt{log}-namespace\footnote{\url{http://www.w3.org/2000/10/swap/log\#}}.
We introduce the predicates \texttt{log:implies} and \texttt{log:includes}. 
%
%
%
%

\begin{definition}[N3 models and entailment]
\label{def:n3Models}
\label{def:n3Entailment}
Let $I$ be an interpretation and $G$ and $H$ be graphs.
We call $I$ an N3 model of $G$, noted $I \models_{n3} G$, iff $I \models_{b} G$ and $I$ satisfies the following conditions:
\begin{itemize}
\item $I$ satisfies the the conditions for \emph{\texttt{log:implies}}~(see Definition~\ref{def:conditions_log_implies})
\item $I$ satisfies the the conditions for \emph{\texttt{log:includes}}~(see Definition~\ref{def:conditions_log_includes})
\end{itemize}
We say that $G$ N3-entails $H$, written as $G \models_{n3} H$ iff for each interpretation $I$ it holds that 
if $I \models_{n3} G$ then $I \models_{n3} H$.
\end{definition}

Note, that the \texttt{log}-namespace covers a wider range of predicates supporting, for example operations on literals~(\texttt{log:dtlit}, \texttt{log:langlit})  or types of terms~(\texttt{log:rawType}). We omit those here as the primary focus of this paper are graph terms. To take into account the missing cases, the above definition can be extended.

Below, we provide the definitions for the two predicates.
We start with the predicate \texttt{log:implies}, which is used for implications.


\begin{definition}[Semantic conditions for \texttt{log:implies}]
\label{def:conditions_log_implies}
The semantic conditions on an interpretation $I$ for \emph{\texttt{log:implies}} are:
\begin{itemize}
\item for all $s, o \in \Delta_I$,
$(s, D_I(\text{log:implies}), o) \in EXT_I$ holds only if either
\begin{itemize}
\item $s = \GTW{G_s}$ and $o = \GTW{G_o}$ are both graph terms and $I \models_{n3} G_o$ holds if $I \models_{n3} G_s$
\item $s$ or $g$ is not a graph term
\end{itemize}
\end{itemize}
\end{definition}
To illustrate how this definition makes it possible to apply rules on data, we come back to our example from the previous section:  
%
\begin{align}\tag{\ref{eq:tom}}
	&\G{}{}{(\text{:tom}, \text{rdf:type}, \text{:Human})}
\\
&\tag{\ref{eq:humanMotherInst}}
	\G{}{}{(
		\GT{}{}{(\text{:tom}, \text{rdf:type}, \text{:Human})},
		\text{log:implies},
		\GT{}{y}{(\text{:tom}, \text{:hasMother}, y)}  
		)}
\end{align}
Any model $I$ of these two triples  must satisfy:
\begin{itemize}
\item $(D_I(\text{:tom}), D_I(\text{rdf:type}), D_I(\text{:Human})) \in EXT_I$
\item $(
	D_I(\GT{}{}{(\text{:tom}, \text{rdf:type},\text{:Human})}),
	D_I(\text{log:implies}),
D_I(\GT{}{y}{(\text{:tom}, \text{:hasMother},\text{:y})}
)) \in EXT_I$
\end{itemize}
By the semantic condition on \texttt{log:implies}~(first bullet), and the fact that $I$ is a model for the graph on the left hand side~(``tom type human''), we derive that 
$I$ is also a model for the graph on the right-hand side.
The graph above thus N3-entails 
\begin{equation}\tag{\ref{eq:tomsMother}}
	\G{}{y}{(\text{:tom}, \text{:hasMother}, y)}.  
\end{equation}

If we were to replace \text{:tom} in the second line with a universally quantified variable $x$~(as in Formula~\ref{eq:humanMother}), the implication still holds. First, consider the implication as a ``regular'' assertion containing a universal. This assertion must hold for every possible value of \texttt{?x}, including \text{:tom}, so the second bullet above would still hold in every basic model of the graph. This would in turn lead to the N3-entailment that $\G{}{y}{(\text{:tom}, \text{:hasMother}, y)}.$, as above. 

The example also shows that N3-entailment supports a form of existential rules. Thus, in general, it is not decidable whether one graph is N3-entailed by  another~\cite{walking}. Other rule frameworks for RDF either exclude blank nodes on rule heads~(e.g. SWRL~\cite{swrl}) or limit how often such rules can be applied~(e.g. SPARQL RL~\cite{srl}). As blank nodes are 
widely used and important in different contexts, N3 allows them in rule heads. Some reasoners like for example EYE~\cite{eyepaper} provide mechanisms to tame such rules by different mechanisms.



The above mechanism even allows us to cover the "unquoting"
example from Section~\ref{motivation} which had a variable as rule head:
\begin{align}\tag{\ref{eq:firstGT}}
&	\texttt{:bob :says \{:alice :likes :pizza, :pasta\}. }\\
\tag{\ref{eq:weirdrule}}
&	\texttt{\{:bob :says ?x \} => ?x. }
\end{align}
Represented in abstract syntax, these formulae can be written as:
\begin{align} \label{eq:gt}
		  &\G{}{}{(\text{:bob}, \text{:says}, 	  \GT{}{}{(\text{:alice}, \text{:likes}, \text{:pizza}),(\text{:alice}, \text{:likes}, \text{:pasta})})}\\\label{eq:weirdruleAbs}
        & 	\G{x}{}{(
		\GT{}{}{(\text{:bob}, \text{:says}, x)},
		\text{log:implies},
		x
		)}
\end{align}
By simple entailment, we obtain Formula~\ref{eq:weirdruleinst} which we represent in abstract semantics as:
\begin{multline}
\G{}{}{(
		\GT{}{}{(\text{:bob}, \text{:says}, \GT{}{}{(\text{:alice}, \text{:likes}, \text{:pizza}),(\text{:alice}, \text{:likes}, \text{:pasta})})},
		\text{log:implies},\\
		\GT{}{}{(\text{:alice}, \text{:likes}, \text{:pizza}),(\text{:alice}, \text{:likes}, \text{:pasta})}
		)}
\end{multline}
With the condition for \text{log:implies}, we derive from Formulae~\ref{eq:gt} and~\ref{eq:weirdruleAbs}: 
\begin{equation}
		\G{}{}{(\text{:alice}, \text{:likes}, \text{:pizza}),(\text{:alice}, \text{:likes}, \text{:pasta})}
\end{equation}
which is the abstract representation of Formula~\ref{eq:res}.


Note that the semantic condition above is a one-way implication~(``only if'').
This means that given two graphs $G$ and $H$ such that $G \models_{n3} H$, it does not necessarily hold that $\GTW{G}$ \texttt{log:implies} $\GTW{H}$.
Note also that the semantic condition allows for \texttt{log:implies} to hold between a non-graph term and any other term~(last sub-bullet in Definition~\ref{def:conditions_log_implies}), in which case, it does not have any special meaning.




            




\vspace{1em}

Similarly to the conditions shown above, we define the semantics for the predicate \texttt{log:includes}. The predicate was initially introduced to mimic basic entailment~(more precisely its predecessor \emph{N3 derivation}~\cite[Section 5.7]{N3Logic}). It has two ground graph terms as arguments and is true if the second graph is b-entailed by the first one. 
A simple example of such a true triple~(in N3 syntax) is:
\begin{equation}\label{eq:logIncl}
\texttt{\{:s :p :o. :a :b :c\} log:includes \{:a :b :c\} }
\end{equation}
As the second argument of the predicate is a subgraph of the first we clearly see it is b-entailed by it~(see Lemma~\ref{lemma:subgraphs}). This example is easy to implement and supported by most reasoners we are aware of. 
However, b-entailment including universal quantification is rather complex as we illustrated earlier 
on Formulae~\ref{eq:unGT1}-\ref{eq:unGT4}.
%
To the best of our knowledge, there is furthermore no reasoner supporting it and
most practical use cases only require a simpler form of entailment which does not go beyond the expressivity of RDF's interpolation lemma 
(Corollary~\ref{RDFint})%
\footnote{Example uses can be found on the N3 test suite \url{https://codeberg.org/phochste/notation3tests}.}. We therefore exclude universal quantification here and define:


\begin{definition}[Existential consequence]\label{def:PC}
If a subgraph of a finite closed N3 graph $G_1$ is isomorphic to an E-instance of an N3 graph $G_2$, we call $G_2$ a \emph{existential consequence} of $G_1$.
\end{definition}
From our considerations in Section~\ref{ssec:relations}, we immediately see:

\begin{corollary}[Existential Consequence]\label{cor:PC}
An N3 graph b-entails any of its existential consequences.
\end{corollary}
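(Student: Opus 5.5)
The plan is to chain the entailment results of Section~\ref{ssec:relations}, using transitivity of $\models_b$, which is immediate from Definition~\ref{def:basicEntailment}. Let $G_1$ be a finite closed graph and $G_2$ an existential consequence of $G_1$. By Definition~\ref{def:PC} there is a subgraph $S\subseteq G_1$ and an E-instance $G_2'$ of $G_2$ with $S\simeq G_2'$. We will show $G_1\models_b S$, then $S\models_b G_2'$, then $G_2'\models_b G_2$, and compose the three steps.

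\textbf{First step.} $S$ is closed, since $FV(S)\subseteq FV(G_1)=\emptyset$. Lemma~\ref{lemma:subgraphs} therefore gives $G_1\models_b S$.

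\textbf{Second step.} We want Lemma~\ref{lemma:isographs}, which requires both $S$ and $G_2'$ to be closed. Isomorphism under $M$ forces $M$ to be a bijection from $FV(F_S)$ to $FV(F_{G_2'})$ and maps the quantifier sets $U,E$ onto each other. Hence $FV(G_2')=\emptyset$ whenever $FV(S)=\emptyset$, after passing to normalisations. Since normalisation does not change the set of models (the dropped variables never occur), we obtain $S\models_b G_2'$.

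\textbf{Third step.} We would like Lemma~\ref{lemma:e-instances}, which is stated for closed $G_2$. If $G_2$ is not closed, we replace it by its closure. By the Closure lemma the closure has the same models as $G_2$, and an E-instance of $G_2$ yields an E-instance of the closure by the same mapping, because the mapping only touches $E$.

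\textbf{Main obstacle.} The delicate point is the third step when $G_2$ has free variables. Passing to the closure moves those variables into $U$, whereas $G_2'$ has the same free variables, which must be empty since $G_2'$ is closed. So free variables of $G_2$ must coincide with free variables of $G_2'$. The exception is variables that the E-substitution removes, but $M$ is defined on $E$ only, so none are removed. Hence $G_2$ is itself closed, and Lemma~\ref{lemma:e-instances} applies directly to give $G_2'\models_b G_2$.

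The remaining work is to check carefully that E-instantiation preserves free variables. I expect this to be routine from Definition~\ref{def:total-application}, since $M^t$ only replaces variables in $dom(M)\subseteq E$, and these are bound in $G_2$.
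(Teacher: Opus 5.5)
Your proposal is correct and takes the paper's route: the paper's proof simply chains Lemma~\ref{lemma:subgraphs}, Lemma~\ref{lemma:isographs} and Lemma~\ref{lemma:e-instances}, composed by transitivity of $\models_b$. Your extra checks that $S$, $G_2'$ and $G_2$ are closed make explicit the hypotheses the paper leaves implicit, and your argument that E-instantiation leaves the free variables outside $E$ unchanged correctly shows that the closure detour in your third step is unnecessary.
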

\begin{proof}
This follows directly from Lemmas~\ref{lemma:isographs},~\ref{lemma:subgraphs} and~\ref{lemma:e-instances}. 
\end{proof}

Note that even graphs with universally quantified variables can have existential consequences. The concept only limits the kind of derivation steps taken into account and explicitly excludes consequences produced by instantiating universals. The graph

\begin{equation}
G=\G{x}{y}{(x,\textit{:p}, y)}\end{equation}
is an existential consequence of graph
$G_1$ in Formula~\ref{eq:uniGT}. 

We use the existential consequence to define the semantics of the predicate \texttt{log:includes}.

\begin{definition}[Semantic conditions for \texttt{log:includes}]
\label{def:conditions_log_includes}
The semantic conditions on an interpretation $I$ for \emph{\texttt{log:includes}} are:
\begin{itemize}
\item $(s, D_I(\text{log:includes}), o) \in EXT_I$ if and only if   $s = \GTW{G_s}$ and $o = \GTW{G_o}$ are both graph terms and 
$G_o$ is an existential consequence of $G_s$.
\end{itemize}
\end{definition}

Note that the semantic condition above is a two-way implication~(``if and only if'').
In other words, the extension of \texttt{log:includes} contains exactly all the pairs of graph terms present in $\Delta_I$ such that the second one is an existential consequence of the first one.



\section{Discussion of the requirements}\label{sec:eval}
Having defined the concept of basic semantics and N3 semantics, we now come back to the requirements identified in Section~\ref{sec:req} discussing them one by one.

\textbf{Semantic compatibility with RDF}
We defined basic entailment~(Definition~\ref{def:basicEntailment}) and characterised it through a generalised interpolation lemma~(Lemma~\ref{lemma:interpolation}). We furthermore showed that for RDF graphs b-entailment and simple entailment coincide~(Lemma~\ref{lemma:simpleEntailment}) and that the N3 interpolation lemma for b-entailment~(Lemma~\ref{lemma:interpolation}) is an extension of the RDF interpolation lemma for simple entailment on finite graphs~(Corollary~\ref{RDFint}).  
The main difference between the two is that the RDF interpolation lemma also holds for infinite graphs while the N3 interpolation lemma is limited to finite graphs.
As entailment between infinite RDF graphs is rarely needed in practice, we consider the difference between both forms of entailment minimal\footnote{
Furthermore, as explained in Section~\ref{sec:conclusion}, we conjecture that Lemma~\ref{lemma:interpolation} can be generalized to a broader class of graphs, not necessarily finite.
}. The semantics we defined is otherwise in line with RDF-entailment and the requirement is met.

\textbf{Referential Opacity}
The second requirement we identified was referential opacity. We meet this requirement by semantically representing graph terms as syntactic objects in base semantics which can then be further refined in N3 semantics -- or possible other extensions -- like we exemplified on implications and inclusion. We come back to our superman example~(Formulae~\ref{eq:super} and~\ref{eq:loislane}), 
which we represent in abstract syntax
\begin{align}
&\G{}{}{(\text{:superman}, \text{owl:sameAs}, \text{:clarkKent})}\\
&\G{}{}{(\text{:loisLane}, \text{:knows}, \GT{}{}{(\text{:superman}, \text{:can}, \text{:fly})}}
\end{align}
Even with an interpretation $I$ where $I(\text{:superman})=I(\text{:clarkKent})$, we have:
\begin{align*}
&I(\GT{}{}{(\text{:superman}, \text{:can}, \text{:fly})})=\GT{}{}{(\text{:superman}, \text{:can}, \text{:fly})}&\\
\neq \hspace{0.2cm}&I(\GT{}{}{(\text{:clarkKent}, \text{:can}, \text{:fly})}) =\GT{}{}{(\text{:clarkKent}, \text{:can}, \text{:fly})}& \end{align*}
With Rule~\ref{eq:transparent} which we write in abstract syntax as
\begin{equation}
\G{x,y,s,p,q,r}{}{(\GT{}{}{(x, \text{owl:sameAs}, y),(s,p,\GT{}{}{x,q,r})}, \text{log:implies},\GT{}{}{(s,p,\GT{}{}{(y,q,r)})})}
\end{equation}
This rule b-entails 
\begin{multline}
\G{}{}{(\GT{}{}{(\text{:superman}, \text{owl:sameAs}, \text{:clarkKent}),(\text{:loisLane},\text{:knows},\GT{}{}{(\text{:superman},\text{:can},\text{:fly})})},\\ \text{log:implies},
\GT{}{}{(\text{:loisLane},\text{:knows},\GT{}{}{(\text{:clarkKent},\text{:can},\text{:fly})})}}
\end{multline}
and with N3-entailment we get
\begin{equation}
\G{}{}{(\text{:loisLane},\text{:knows},\GT{}{}{(\text{:clarkKent},\text{:can},\text{:fly})})}
\end{equation}
Our semantics supports referential opacity, while allowing to \emph{locally} enable the derivations that would follow from referential transparency, as illustrated above.

\textbf{Closedness of graph terms}
Another requirement we identified was that graph terms in N3 need to be closed as they are used to represent premises and conclusions of rules. We again meet this criterion by interpreting graph terms by an isomorphic syntactical representation. In the case of implications, we further interpret the graph terms as graphs which in case they are true in a model can be used to derive new facts. If needed, IRIs could be used to represent open graphs. We could for example state
\[ \G{}{}{(\text{:loisLanesKnowledge}, \text{:contains}, \GT{}{}{(\text{:superman}, \text{:can}, \text{:fly})})}\]
and provide semantics for the predicate \textit{:contains} which ensure that the subject of the predicate represents a graph consisting of \emph{at least} the triples occurring in its object graphs. 
This idea has been discussed in previous work to represent RDF datasets using N3~\cite{N3Namedgraphs}. 

\textbf{Clarity and flexibility of quantification scopes}
We introduced explicit quantification to avoid possible ambiguities arising from implicit scoping.  For each graph we declare the set of universally and existentially quantified variables. This makes our semantics independent of the 
concrete syntax whose scoping of implicit quantification is still subject to discussion. Our semantics with explicit scoping makes it possible to better understand the impact of future scoping decisions and possible restrictions to the language.
In our approach, we chose to follow N3's original specification in which the scope of universal specification is always outside the scope of existential quantification when these are declared on the same graph~\cite[p.20]{N3Logic}. At least if we employ N3 entailment and not just basic entailment, this design decision does not restrict the expressivity of our logic. For example, if we want to modify our initial example on quantification~("Everybody loves somebody", Formula~\ref{eq:love}) to express that there exists someone who is loved by everyone, we can state 
\begin{equation}
\G{}{y}{(\GT{}{}{},\text{log:implies},\GT{x}{}{(x, \textit{:loves}, y)} )}
\end{equation}
According to our semantics, the empty graph is always true and in an N3 model the formula has thus the intended meaning.

\section{Related work}\label{sec:relatedwork}

The work presented in this paper has been influenced by several other publications. 

First, it relies on the original contributions introducing Notation3 Logic: W3C team submission~\cite{berners-lee_notation3_2011}, a web page by Tim Berners-Lee explaining design issues of the framework~\cite{designissues_2005}, and a research paper detailing the logic~\cite{N3Logic}. All N3 features discussed in this paper were informally introduced in these early specifications of the logic; however, none of them had formal semantics.
This explicitly includes graph terms and implicit quantification of two different types. 
N3 also offers the means to state 
 explicit quantification~(see also Formulae~\ref{eq:love3} and~\ref{eq:expli}). 
In the past, this feature often lead to problems because N3 treats the quantifiers themselves as parts of the graphs and 
thus as elements of sets. This makes it difficult to determine the proper scope
of a quantifier and their orders.
The specification allows us to write Formula~\ref{eq:love3} from above  
as
\begin{equation}\label{eq:love4}\tag{\ref*{eq:love}$'''$}
	\texttt{ @forSome :y. @forAll :x. :x :loves :y.}
\end{equation}
without changing its meaning (\emph{"everybody loves somebody"}).
As we consider this detail confusing,
we impose a fixed order for 
the two kinds of quantifiers.  Not only explicit quantification but also its implicit counterparts has lead to  contradicting interpretations.
In previous work, we investigated the extent and impact of these disagreements and proposed a mapping to a logic with explicit quantification~\cite{arndt2019implicit,arndt2015semantics}. This mapping could cope with 
formulae which are similar to First Order Logic like, for example, the implication detailed in Formula~\ref{eq:rule1}. But it did
not cover cases like the unquoting exemplified in Formula~\ref{eq:weirdrule}  or the built-in \texttt{log:includes}. 
The development of N3 was continued by a W3C community group which provided specifications for the syntax of the language~\cite{woensel2023n3}, its semantics~\cite{arndt2023n3semantics}, and its built-in functions~\cite{n3builtins}. 
The semantics developed in this context forms the base of the work we present in this paper. In addition to refining the definitions of that specification, we furthermore 
investigate N3's formal relationship to RDF simple entailment and add a semantic definition for the predicate \texttt{log:includes}. In earlier work this predicate relies on a simplified form of entailment called \emph{n3-derivation}~\cite[p.13]{N3Logic}. Even though n3-derivation was not fully detailed when introduced, we see a strong similarity to basic entailment as it was introduced in this paper. 

The idea of referencing to graphs, formulae, or statements has not only been discussed in the context of N3 but can also be found in different frameworks for the web and beyond. We first discuss web formalisms to then broaden our discussion  towards logic in general. In the moment of writing, the W3C RDF \& SPARQL working group is developing RDF 1.2~\cite{rdf12-concepts} and SPARQL~1.2~\cite{sparql-12} which add triple terms to RDF and SPARQL. Conceptually, these are very similar to graph terms with the main difference that users can not use whole graphs as terms of triples but only single triples. RDF 1.2 originates from RDF-star~(also RDF$^*$)~\cite{hartig2021rdf} which was originally proposed by Hartig~\cite{hartig2017foundations}.  In RDF 1.2 the use of triple terms is restricted and they are only allowed in the object position of a triple. Blank nodes in triple terms are globally quantified and triple terms are referentially transparent. In both of these aspects, triple terms differ from N3 graph terms. A singleton N3 graph term is not the same as a triple term in RDF 1.2. RDF does furthermore not support rules and it does not support any mechanism to unquote triple terms as illustrated for N3 with Example~\ref{eq:weirdrule}. 

Before the rise and standardisation of triple terms, RDF reification~\cite[Appendix D1]{RDFsemantics} was used to make statements about triples. There was no special syntactic construct to refer to triples but these were described in RDF's triple syntax. A reified triple was represented by an IRI or blank node
in combination with predicates making connections to the different components of a triple~(\texttt{rdf:subject}, \texttt{rdf:predicate}, \texttt{rdf:object}). There was no semantics for RDF reification and no guarantees or checks that a reified triple was well-formed~(we could for example have two subject declarations for the same term). However, by the way RDF reification is done, we can assume that it is referentially transparent and that all existential quantification is global.

As an alternative to RDF reification, Nguyen et al. 
proposed \emph{singleton properties}~\cite{nguyen2014don}.  The idea behind this concept is to use, in each triple to be reified, a fresh predicate that is related to the original property. That predicate can then be used to refer to the single triple where it is used. This approach to refer to triples does not only differ syntactically from the former two, it also differs in another sense: while triple terms and also reification, just like N3 graph terms, can be used without stating the triple, this is not possible with singleton properties. 
We can, for example, not express Formula~\ref{eq:super} without asserting that \texttt{:superman :can :fly}.

To refer to graphs RDF 1.1 furthermore introduced the concept of RDF datasets~\cite[Section 4]{rdf11-concepts}. Here, a combination of a graph name and a graph can be stated in an RDF document and the name can then further be used to refer to that construct. Although RDF datasets are part of the RDF 1.1 recommendation, their semantics has not been fixed and there exist different proposals~\cite{namedgraphs} addressing the different aspects identified in Section~\ref{sec:req} in various ways most of which can be represented by N3~\cite{N3Namedgraphs}. The semantics introduced by Carroll et al.~\cite{namedGraphsHayes} is the closest to the graph term semantics introduced in this paper. A named graph in this semantics is a pair composed of a name and a graph, and that name can be used in triples and always refers to the pair itself.
As N3 graph terms do not have names, we observe a difference here. The graph is closed, blank nodes are local, and the graph itself is interpreted as a syntactical object thereby supporting referential opacity. 

Outside of the web context, Common Logic~\cite{CommonLogic} defines constructs similar to RDF datasets and their semantics according to Carroll et al. Here, we  can again combine formulae with names to then use these names in other formulae. In contrast to RDF datasets, Common Logic supports universal quantification but it does not support to quantify into a formula. The expression
\begin{equation}
\G{x}{}{(x, \text{says}, \GT{}{}{(x, \text{is}, \text{wise})})}
\end{equation}
is supported by our semantics of N3 but cannot be expressed in Common Logic. The difficulty here is that the quantified variable $x$ occurs inside and outside of the graph term. We solve that by first grounding the variable, then further interpret it according to the position where it is used~(once as a direct term, once as part of the graph term). Common Logic does furthermore not support unquoting. 
In contrast to that, \emph{Qiana}, a framework developed by Coumes et al.~\cite{qiana}, supports both quantifying into a formula/graph term and unquoting. The mechanism to do so differs from our approach, as its formalisation treats quoted formulae as logical functions which directly take terms of the domain of discourse as arguments. This can be used to resolve quantification over variables used simultaneously inside and outside a quoted formula -- we use grounding instead. Qiana furthermore diverges from N3 in that its main focus is to display formulae together with some context under which they may or may not be true. In that regard, it is close to the different variants of modal logic~\cite{modalLogic}.  
We adopt Qiana's notation of underlining formulae/graph terms to indicate their mention rather than use.
 
The concept of universal extension of N3 graphs~(Definition~\ref{def:universal-extension}) and the related theory have been influenced by the work around \emph{existential rules}~\cite{walking}
(also called \emph{tuple-generating dependencies}~\cite{tgds} or 
\emph{Datalog\({}^{\mbox{{\(\pm\)}}}\)}~\cite{datalog+-}).  
Here, the \emph{chase} procedure can be used to construct \emph{universal models}~\cite{universalModel}: existential rules are applied repeatedly to a growing set of facts, where each rule application adds new facts unless its head already maps homomorphically into the existing ones. In the case of basic entailment, we do not have rules but $\forall\exists$-formulae which could be seen as existential rules with empty bodies. To mimic RDF's interpolation lemma, we therefore have to apply them on all terms producing infinite models. Existential rules are furthermore closely related to N3 rules employing the built-in predicate \texttt{log:implies}~\cite{exN3,arndt2023notation3} and query answering for N3 entailment faces the same problems as for existential rules and is in general undecidable.


\section{Conclusion and future work}\label{sec:conclusion}

We have proposed an abstract syntax and a formal semantics for N3.
Compared to previous efforts in that direction, our proposal aims to cover the different ways in which graph terms have been used historically in N3: statements about statements, and rules.
This formal semantics is defined in two levels: basic semantics (comparable to RDF's simple entailment), and N3 semantics, which includes N3's special predicates, notably the rule-defining predicate \texttt{log:implies}.
We have shown that this formal semantics satisfies the requirement we identified as meeting those different use-cases,
and that RDF's interpolation lemma for simple entailment can be extended to N3's basic semantics (on finite graphs).

While we proved our generalized interpolation lemma for finite graphs only, we conjecture that a proof may exist for infinite graphs as well. As a matter of fact, our proof does not prevent the graphs to have an infinite number of triples; it only requires them to not use all the variables (so that we can build a universal extension) nor all the ground terms (so that we can build a mapping $B_1$). But we believe that even those constrains could be lifted, and plan to address them in future work.

We also acknowledge that our generalized interpolation lemma is far less practical than its counterpart in RDF:
it relies in the universal extension of N3 graphs,
which are infinite as soon as the graph contains a universal variable
(unlike universal models~\cite{universalModel}, the concept of which universal extensions are inspired).
In the future, we wish to provide an entailment procedure which only takes the terms occurring in the two involved graphs into account 
similar to the
RDFS entailment rules provided in the RDF specification \cite{DBLP:journals/ws/Horst05,RDFsemantics}.

Finally, while we formally define in N3-entailment the semantics of the special predicates \texttt{log:implies} and \texttt{log:includes}, the original N3 specification (and its implementations) provide much more special predicates.
We plan to include them, following the example of the two we described in this paper.
Note that some of these additional properties will include new challenges,
such as negation (e.g. \texttt{log:notIncludes}\footnote{
\url{https://w3c-cg.github.io/N3/spec/builtins.html\#log:notIncludes}
})
and aggregation (e.g. \texttt{log:forAllIn}\footnote{
\url{https://w3c-cg.github.io/N3/spec/builtins.html\#log:forAllIn}
}). As N3 allows variables to occur in the predicate position of triples, 
these  concepts cannot rely on classical stratification \cite{DBLP:books/mk/minker88/AptBW88}. We plan to base our definitions on \emph{chain stratification}~\cite{KIAK2026}, a concept recently proposed to overcome that problem.



\appendix




\bibliography{bibliography}

@techreport{beckett_turtle_nodate,
	title        = {{RDF} 1.1 Turtle},
	author       = {Eric Prud'hommeaux and Gavin Carothers},
	year         = 2014,
	month        = feb,
	note         = {https://www.w3.org/TR/2014/REC-turtle-20140225/},
	type         = {{W3C} Recommendation},
	institution  = {{W3C}}
}

@techreport{owl,
	title        = {{OWL} 2 Web Ontology Language
Document Overview (Second Edition)},
	author       = {{W3C OWL Working Group}},
	year         = 2012,
	month        = dec,
	note         = {\url{https://www.w3.org/TR/owl2-overview/}},
	type         = {{W3C} Recommendation},
	institution  = {{W3C}}
}

@techreport{rdf11-concepts,
	title        = {{RDF} 1.1 Concepts and Abstract Syntax},
	author       = {Markus Lanthaler and David Wood and Richard Cyganiak},
	year         = 2014,
	month        = feb,
	url         = {https://www.w3.org/TR/2014/REC-rdf11-concepts-20140225/},
	type         = {{W3C} Recommendation},
	institution  = {{W3C}}
}

@techreport{rdf12-concepts,
	title        = {{RDF} 1.2 Concepts and Abstract Syntax},
	author       = {Gregg Kellogg and
Olaf Hartig and
Pierre-Antoine Champin and
Andy Seaborne},
	year         = 2026,
	month        = apr,
	url         = {https://www.w3.org/TR/rdf12-concepts/},
	type         = {{W3C} Candidate Recommendation Snapshot},
	institution  = {{W3C}}
}

@techreport{berners-lee_notation3_2011,
	title        = {Notation3 ({N3}): {A} readable {RDF} syntax},
	author       = {Berners-Lee, Tim and Connolly, Dan},
	year         = 2011,
	url          = {https://www.w3.org/TeamSubmission/n3/},
	urldate      = {2021-07-15},
	institution  = {{W3C}}
}

@techreport{designissues_2005,
	title        = {{Notation 3 Logic}},
	author       = {Berners-Lee, Tim},
	year         = 2005,
	url          = {https://www.w3.org/DesignIssues/Notation3},
	institution  = {{W3C}}
}

@article{N3Logic,
	title        = {{N3Logic: A logical framework for the World Wide Web}},
	author       = {Berners-Lee, Tim and Connolly, Dan and Kagal, Lalana and Scharf, Yosi and Hendler, Jim},
	year         = 2008,
	journal      = {Theory and Practice of Logic Programming},
	publisher    = {Cambridge University Press},
	volume       = 8,
	number       = 3,
	pages        = {249--269},
	doi          = {10.1017/S1471068407003213},
	issn         = {1471-0684},
	url          = {http://arxiv.org/abs/0711.1533},
	modificationdate = {2021-09-09T10:19:27}
}

@misc{cwm,
	title        = {{Cwm}},
	author       = {Berners-Lee, Tim},
	year         = {2000--2009},
	note         = {\url{http://www.w3.org/2000/10/swap/doc/cwm.html}}
}

@techreport{woensel2023n3,
	title        = {Notation3 Language},
	author       = {William {Van Woensel} and Dörthe Arndt and Pierre-Antoine Champin and Dominik Tomaszuk and Gregg Kellogg},
	year         = 2025,
	month        = oct,
	note         = {\url{https://w3c.github.io/N3/reports/20230703/}},
	type         = {{W3C} Community Group Report},
	institution  = {{W3C}}
}

@techreport{arndt2023n3semantics,
	title        = {Notation3 Semantics},
	author       = {Dörthe Arndt and Pierre-Antoine Champin},
	year         = 2025,
	month        = oct,
	note         = {\url{https://w3c.github.io/N3/reports/20230703/semantics.html}},
	type         = {{W3C} Community Group Report},
	institution  = {{W3C}}
}

@techreport{n3builtins,
	title        = {Notation3 Builtin Functions},
	author       = {William {Van Woensel} and Patrick Hochstenbach},
	year         = 2023,
	month        = jul,
	note         = {\url{https://w3c.github.io/N3/reports/20230703/builtins.html}},
	type         = {{W3C} Community Group Report},
	institution  = {{W3C}}
}

@inproceedings{arndt2015semantics,
	title        = {Semantics of {Notation3} logic: A solution for implicit quantification},
	author       = {Arndt, D{\"o}rthe and Verborgh, Ruben and De Roo, Jos and Sun, Hong and Mannens, Erik and Van de Walle, Rik},
	year         = 2015,
	booktitle    = {Rule Technologies: Foundations, Tools, and Applications: 9th International Symposium, RuleML 2015, Berlin, Germany, August 2-5, 2015, Proceedings 9},
	pages        = {127--143},
	organization = {Springer},
    doi          = {10.1007/978-3-319-21542-6_9}
}

@techreport{swrl,
	title        = {{SWRL}: A semantic web rule language combining {OWL} and {RuleML}},
	author       = {Horrocks, Ian and Patel-Schneider, Peter F and Boley, Harold and Tabet, Said and Grosof, Benjamin and Dean, Mike},
	year         = 2004,
	month        = may,
	note         = {\url{https://www.w3.org/Submission/2004/SUBM-SWRL-20040521/}},
	type         = {{W3C} Member Submission},
	institution  = {{W3C}}
}

@techreport{srl,
	title        = {{SPARQL 1.2 RL}},
	author       = {Robert David and
David Habgood and
Andy Seaborne and
Simon Steyskal },
	year         = 2026,
	month        = sep,
	note         = {\url{https://www.w3.org/TR/sparql12-rl/}},
	type         = {{W3C} Working Draft},
	institution  = {{W3C}}
}

@article{eyepaper,
	title        = {Drawing Conclusions from Linked Data on the Web: The {EYE} Reasoner},
	author       = {Verborgh, Ruben and De Roo, Jos},
	year         = 2015,
	month        = may,
	journal      = {IEEE Software},
	volume       = 32,
	number       = 5,
	pages        = {23--27},
    doi = {10.1109/MS.2015.63}
}

@techreport{sparql-12,
	title        = {{SPARQL} 1.2 Query Language},
	author       = {Olaf Hartig and
Andy Seaborne and
Ruben Taelman and
Gregory Williams and
Thomas Pellissier Tanon},
	year         = 2026,
	month        = sep,
	note         = {\url{https://www.w3.org/TR/sparql12-query/}},
	type         = {{W3C} Working Draft},
	institution  = {{W3C}}
}

@techreport{Prudhommeaux2014RT,
	title        = {{RDF} 1.1 Turtle},
	author       = {Eric Prud'hommeaux and Gavin Carothers},
	year         = 2014,
	month        = feb,
	url         = {https://www.w3.org/TR/2014/REC-turtle-20140225/},
	type         = {{W3C} Recommendation},
	institution  = {{W3C}}
}

@techreport{RDFsemantics,
	title        = {{RDF} 1.1 {Semantics}},
	author       = {Hayes, Pat and Patel-Schneider, Peter F.},
	year         = 2014,
	month        = feb,
	day          = 25,
	url          = {https://www.w3.org/TR/rdf11-semantics/},
	urldate      = {2023-01-04},
	type         = {{W3C} Recommendation},
    institution  = {{W3C}}
}

@article{kafer2018rule,
  title={Rule-based Programming of User Agents for Linked Data.},
  author={K{\"a}fer, Tobias and Harth, Andreas},
  journal={LDOW@ WWW},
  volume={2073},
  year={2018},
  url={https://ceur-ws.org/Vol-2073/article-05.pdf}
}

@article{arndt2019implicit,
  title={Implicit quantification made explicit: How to interpret blank nodes and universal variables in {Notation3} Logic},
  author={Arndt, D{\"o}rthe and Schrijvers, Tom and De Roo, Jos and Verborgh, Ruben},
  journal={Journal of Web Semantics},
  volume={58},
  pages={100501},
  year={2019},
  publisher={Elsevier},
  doi={10.1016/j.websem.2019.04.001}
}

@inproceedings{arndt2023notation3,
  title={Notation3 as an existential rule language},
  author={Arndt, D{\"o}rthe and Mennicke, Stephan},
  booktitle={International Joint Conference on Rules and Reasoning},
  pages={70--85},
  year={2023},
  organization={Springer},
  doi={10.1007/978-3-031-45072-3_5},
}

@article{exN3,
	author    = {D{\"{o}}rthe Arndt and Stephan Mennicke},
	title     = {Existential {Notation3} Logic},
	journal   = {Theory and Practice of Logic Programming},
	publisher = {Cambridge University Press},
	year      = {2025},
	month     = {June},
	pages     = {1-36},
    doi       = {10.1017/S1471068425000055}
}

@techreport{hartig2021rdf,
	title        = {{RDF-star} and {SPARQL-star}},
	author       = {Olaf Hartig and
Pierre-Antoine Champin  and
Gregg Kellogg and
Andy Seaborne },
	year         = 2021,
	month        = dec,
	note         = {\url{https://www. w3. org/2021/12/rdf-star. html}},
	type         = {{W3C} Community Group Report},
	institution  = {{W3C}}
}

@inproceedings{hartig2017foundations,
  title={Foundations of {RDF*} and {SPARQL*}:(An alternative approach to statement-level metadata in {RDF})},
  author={Hartig, Olaf},
  booktitle={AMW 2017 11th Alberto Mendelzon International Workshop on Foundations of Data Management and the Web, Montevideo, Uruguay, June 7-9, 2017.},
  volume={1912},
  year={2017},
  organization={Juan Reutter, Divesh Srivastava},
  url={https://ceur-ws.org/Vol-1912/paper12.pdf}
}

@inproceedings{nguyen2014don,
  title={Don't like {RDF} reification? {M}aking statements about statements using singleton property},
  author={Nguyen, Vinh and Bodenreider, Olivier and Sheth, Amit},
  booktitle={Proceedings of the 23rd international conference on World Wide Web},
  pages={759--770},
  year={2014},
  doi={10.1145/2566486.2567973}
}

@techreport{namedgraphs,
	title        = {{RDF 1.1}: On Semantics of {RDF} Datasets},
	author       = {Antoine Zimmermann},
	year         = 2014,
	month        = feb,
	url         = {https://www.w3.org/TR/rdf11-datasets/},
	type         = {{W3C} Working Group Note},
	institution  = {{W3C}}
}

@inproceedings{N3Namedgraphs,
  author       = {D{\"{o}}rthe Arndt and
                  William Van Woensel},
  editor       = {Ahmet Soylu and
                  Sotiris Moschoyiannis and
                  Guido Governatori and
                  Mantas Simkus and
                  Petros S. Stefaneas and
                  Alexander Steen and
                  Adrian Giurca},
  title        = {Towards Supporting Multiple Semantics of Named Graphs Using {N3} Rules},
  booktitle    = {Proceedings of the 13th RuleML+RR 2019 Doctoral Consortium and Rule
                  Challenge, September 16-19, 2019 - Bolzano, Italy},
  series       = {{CEUR} Workshop Proceedings},
  volume       = {2438},
  publisher    = {CEUR-WS.org},
  year         = {2019},
  url          = {https://ceur-ws.org/Vol-2438/paper6.pdf},
  bibsource    = {dblp computer science bibliography, https://dblp.org}
}

@inproceedings{namedGraphsHayes,
  author       = {Jeremy J. Carroll and
                  Christian Bizer and
                  Patrick J. Hayes and
                  Patrick Stickler},
  editor       = {Allan Ellis and
                  Tatsuya Hagino},
  title        = {Named graphs, provenance and trust},
  booktitle    = {Proceedings of the 14th international conference on World Wide Web,
                  {WWW} 2005, Chiba, Japan, May 10-14, 2005},
  pages        = {613--622},
  publisher    = {{ACM}},
  year         = {2005},
  url          = {https://doi.org/10.1145/1060745.1060835},
  doi          = {10.1145/1060745.1060835},
  bibsource    = {dblp computer science bibliography, https://dblp.org}
}

@inproceedings{universalModel,
  author       = {Alin Deutsch and
                  Alan Nash and
                  Jeffrey B. Remmel},
  editor       = {Maurizio Lenzerini and
                  Domenico Lembo},
  title        = {The chase revisited},
  booktitle    = {Proceedings of the Twenty-Seventh {ACM} {SIGMOD-SIGACT-SIGART} Symposium
                  on Principles of Database Systems, {PODS} 2008, June 9-11, 2008, Vancouver,
                  BC, Canada},
  pages        = {149--158},
  publisher    = {{ACM}},
  year         = {2008},
  url          = {https://doi.org/10.1145/1376916.1376938},
  doi          = {10.1145/1376916.1376938},
  bibsource    = {dblp computer science bibliography, https://dblp.org}
}

@article{walking,
  author       = {Jean{-}Fran{\c{c}}ois Baget and
                  Michel Lecl{\`{e}}re and
                  Marie{-}Laure Mugnier and
                  Eric Salvat},
  title        = {On rules with existential variables: Walking the decidability line},
  journal      = {Artif. Intell.},
  volume       = {175},
  number       = {9-10},
  pages        = {1620--1654},
  year         = {2011},
  doi          = {10.1016/J.ARTINT.2011.03.002},
  bibsource    = {dblp computer science bibliography, https://dblp.org}
}

@article{qiana,
  author       = {Simon Coumes and
                  Pierre{-}Henri Paris and
                  Fran{\c{c}}ois Schwarzentruber and
                  Fabian M. Suchanek},
  title        = {Qiana: {A} First-Order Formalism to Quantify over Contexts and Formulas
                  with Temporality},
  journal      = {J. Artif. Intell. Res.},
  volume       = {85},
  year         = {2026},
  doi          = {10.1613/JAIR.1.18402},
  bibsource    = {dblp computer science bibliography, https://dblp.org}
}

@techreport{CommonLogic,
  author       = {{ISO/IEC}},
  title        = {Information technology --- {Common Logic ({CL})}: 
                  A framework for a family of logic-based languages},
  type         = {Standard},
  number       = {ISO/IEC 24707:2018},
  edition      = {2},
  year         = {2018},
  institution  = {International Organization for Standardization},
  address      = {Geneva, Switzerland},
}

@book{modalLogic,
  author       = {Patrick Blackburn and
                  Maarten de Rijke and
                  Yde Venema},
  title        = {Modal Logic},
  series       = {Cambridge Tracts in Theoretical Computer Science},
  volume       = {53},
  publisher    = {Cambridge University Press},
  year         = {2001},
  url          = {https://doi.org/10.1017/CBO9781107050884},
  doi          = {10.1017/CBO9781107050884},
  isbn         = {978-1-10705088-4},
  bibsource    = {dblp computer science bibliography, https://dblp.org}
}

@inproceedings{datalog+-,
  author       = {Andrea Cal{\`{\i}} and
                  Georg Gottlob and
                  Thomas Lukasiewicz},
  editor       = {Ronald Fagin},
  title        = {Datalog\({}^{\mbox{{\(\pm\)}}}\): a unified approach to ontologies
                  and integrity constraints},
  booktitle    = {Database Theory - {ICDT} 2009, 12th International Conference, St.
                  Petersburg, Russia, March 23-25, 2009, Proceedings},
  series       = {{ACM} International Conference Proceeding Series},
  volume       = {361},
  pages        = {14--30},
  publisher    = {{ACM}},
  year         = {2009},
  url          = {https://doi.org/10.1145/1514894.1514897},
  doi          = {10.1145/1514894.1514897},
  bibsource    = {dblp computer science bibliography, https://dblp.org}
}

@book{tgds,
  author       = {Serge Abiteboul and
                  Richard Hull and
                  Victor Vianu},
  title        = {Foundations of Databases},
  publisher    = {Addison-Wesley},
  year         = {1995},
  url          = {http://webdam.inria.fr/Alice/},
  isbn         = {0-201-53771-0},
  bibsource    = {dblp computer science bibliography, https://dblp.org}
}

@inproceedings{KIAK2026,
  author    = {Nils K{\"{u}}chenmeister and Alex Ivliev and D{\"{o}}rthe Arndt
               and Markus Kr{\"{o}}tzsch},
  title     = {Stratified Negation in {RDF} Rules: A Correct Approach},
  editor    = {Manolis Koubarakis and Maria-Esther Vidal and Axel Polleres and
               Marieke van Erp and Ernesto Jimenez Ruiz and Oshani Seneviratne
               and Lora Aroyo and Gianluca Demartini and Reham Alharbi and
               Roberto Barile and Claudia {d'Amato} and Valentina Tamma},
  booktitle = {Proceedings of the 25th International Semantic Web Conference
               (ISWC 2026)},
  publisher = {Springer},
  year      = {2026}
}

@InProceedings{Genesereth2015,
author="Genesereth, Michael
and Kao, Eric",
editor="Bassiliades, Nick
and Gottlob, Georg
and Sadri, Fariba
and Paschke, Adrian
and Roman, Dumitru",
title={The {Herbrand} Manifesto},
booktitle="Rule Technologies: Foundations, Tools, and Applications",
year="2015",
publisher="Springer International Publishing",
address="Cham",
pages="3--12",
doi={10.1007/978-3-319-21542-6_1},
isbn="978-3-319-21542-6"
}

@article{DBLP:journals/ws/Horst05,
  author       = {Herman J. ter Horst},
  title        = {Completeness, decidability and complexity of entailment for {RDF}
                  Schema and a semantic extension involving the {OWL} vocabulary},
  journal      = {J. Web Semant.},
  volume       = {3},
  number       = {2-3},
  pages        = {79--115},
  year         = {2005},
  doi          = {10.1016/J.WEBSEM.2005.06.001},
  bibsource    = {dblp computer science bibliography, https://dblp.org}
}

@incollection{DBLP:books/mk/minker88/AptBW88,
  author       = {Krzysztof R. Apt and
                  Howard A. Blair and
                  Adrian Walker},
  editor       = {Jack Minker},
  title        = {Towards a Theory of Declarative Knowledge},
  booktitle    = {Foundations of Deductive Databases and Logic Programming},
  pages        = {89--148},
  publisher    = {Morgan Kaufmann},
  year         = {1988},
  doi          = {10.1016/B978-0-934613-40-8.50006-3},
  bibsource    = {dblp computer science bibliography, https://dblp.org}
}

\end{document}